\theoremstyle{remark}
\newtheorem{lemma}{\textbf{{Lemma}}} 
\newtheorem{Theorem}{\textbf{{Theorem}}}
\definecolor{Gray}{gray}{0.85}
\definecolor{mycyan}{cmyk}{.3,0,0,0}
\newcolumntype{a}{>{\columncolor{Gray}}c}
\newcolumntype{b}{>{\columncolor{white}}c}
\newcolumntype{d}{>{\columncolor{mycyan}}c}
\begin{document}
	\captionsetup[figure]{labelfont={},name={Fig.},labelsep=period} 
	
	\title{Analysis of Age of Information for A  Discrete-time Dual-Queue System}
	\author{Zhengchuan Chen,~\IEEEmembership{Senior Member,~IEEE}, Yi~Qu, Nikolaos Pappas,~\IEEEmembership{Senior Member,~IEEE}, Chaowei~ Tang,~\IEEEmembership{Member,~IEEE}, Min~ Wang,~\IEEEmembership{Member,~IEEE}, and Tony Q. S. Quek,~\IEEEmembership{Fellow,~IEEE} 
		
		\thanks{Z. Chen, Y. Qu, and C. Tang are with the School of  Microelectronics and Communication Engineering, Chongqing University, Chongqing, China (E-mails:~\{czc@cqu,~yqu@stu.cqu,~cwtang@cqu\}.edu.cn).}
		\thanks{N. Pappas is with the Department of Computer and Information Science, Linköping University, 58183 Linköping, Sweden (Email: nikolaos.pappas@liu.se).}
		\thanks{M. Wang is with the School of Optoelectronics Engineering, Chongqing University of Posts and Telecommunications, Chongqing, China (E-mail:~wangm@cqupt.edu.cn).}
		\thanks{T. Q. S. Quek is with the Singapore University of Technology and Design, Singapore 487372, and also with the Yonsei Frontier Lab, Yonsei University, South Korea (e-mail: tonyquek@sutd.edu.sg).}
	}
	\maketitle

	\begin{abstract}
		Using multiple sensors to update the status process of interest is promising in improving the information freshness. The unordered arrival of status updates at the monitor end poses a significant challenge in analyzing the timeliness performance of parallel updating systems. This work investigates the age of information (AoI) of a discrete-time dual-sensor status updating system. Specifically, the status update is generated following the zero-waiting policy. The two sensors are modeled as a geometrically distributed service time queue and a deterministic service time queue in parallel. We derive the analytical expressions for the average AoI and peak AoI using the graphical analysis method. Moreover, the connection of average AoI between discrete-time and continuous-time systems is also explored. It is shown that the AoI result of the continuous-time system is a limit case of that of the corresponding discrete-time system. Hence, the AoI result of the discrete-time system is more general than the continuous one. Numerical results validate the effectiveness of our analysis and further show that randomness of service time contributes more AoI reduction than determinacy of service time in dual-queue systems in most cases, which is different from what is known about the single-queue system.   
	\end{abstract}
	
	\begin{IEEEkeywords}
		Age of Information, discrete-time, dual-queue, status updating system. 
	\end{IEEEkeywords}
	
	\section{Introduction}
	\setlength{\textfloatsep}{0.9\baselineskip plus 0.2\baselineskip minus 0.2\baselineskip}
		Internet of Things (IoT) plays an increasing role in emerging industry such as e-health, smart home, auto-driving, and remote monitoring\cite{Iot}. For real-time applications in IoT systems, sensors  need to sense their physical surroundings and monitor system status in a timely manner to provide effective information for intelligent decision making and control. In such scenarios, the freshness of information is extremely important. The concept of age of information (AoI), which characterized the time elapsed since the generation of the latest received status update, becomes a pertinent  measure of information freshness. Since the emergency of AoI, abundant insights for improving information freshness of status updating system have been reported\cite{Aoigainian}.


	Initial research on AoI mainly focused on the analysis of average and peak AoI of status updating in elementary classical queuing systems under different queuing disciplines\cite{firstwork, stationary,distribution1,distribution2,distribution3, LCFS,bao1,bao2,bao3,firstMulti,duoduilie1,duoduilie2,TCP,shuangduilie,mmmd}. In these studies, the transmission link between a sensor and a monitor is often abstracted as a queue. Yates et al. in \cite{firstwork} derived the average AoI of the M/M/1 \footnote{Kendall's notation for the classification of queue types. In particular, an asterisk superscript of the server number represents preemption.}, M/D/1 and D/M/1 queuing systems under the first-come-first-served (FCFS) discipline, to characterize the properties of the AoI starting from basic models. 
	The authors of \cite{stationary} studied the average AoI of the M/GI/1 and GI/M/1 queues and derived a general expression of the stationary distribution, which provides a more comprehensive understanding of the AoI of systems. 
	The AoI under extended models of the basic queuing models is considered in \cite{distribution1,distribution2,distribution3}, and general formulas for the distribution of AoI and PAoI in a class of networks are obtained.
	For enhancing the performance of AoI for queuing systems, further works investigated different service disciplines and evaluated the corresponding performance gain\cite{LCFS,bao1,bao2,bao3}. Specifically, the authors of  \cite{LCFS} investigated the AoI of M/M/1/1 under the last-come-first-served (LCFS) discipline, and the results indicated that, compared with FCFS discipline, prioritizing newly arrived updates under the LCFS discipline can effectively reduce the average AoI of systems. Packet management is raised as an effective queuing policy for information freshness enhancement \cite{bao1,bao2,bao3}. Specifically, the work in  \cite{bao1} presented the average AoI of the M/M/1 queue under different packet management strategies in which the M/M/1 queue is equivalently modeled as M/M/1/1, M/M/1/2, and M/M/1/2* queues, respectively. The authors of \cite{bao2} further developed packet management strategies by introducing a penalty function, which 
 help better understand the trade-off between system reliability and tolerance for outdated data. 
	The work in \cite{bao3}  proposed a mechanism to add service waiting in packet queues, which were modeled as M/GI/1/1 and M/GI/1/2* queues, revealing that discarding stale packets and replacing data packets with new status updates can further reduce the AoI. In order to further improve the information freshness, an intuitive idea is to use additional communication devices, e.g., sensors,  for status updating\cite{firstMulti,duoduilie1,duoduilie2,TCP,shuangduilie,mmmd}. Multiple servers are employed to process information in a single-queue system to enhance the information freshness\cite{firstMulti,duoduilie1,duoduilie2}. The average AoI of a multi-server system was first studied in \cite{firstMulti}. The work in \cite{duoduilie1} modeled systems with limited and abundant servers as M/M/2 and M/M/$\infty$ model, respectively, providing insights on how multi servers affect the AoI evolution of the system.  The work in \cite{duoduilie2} extended the study of AoI to more general M/M/c queue models. The results showed that the increase in the number of servers is quite significant in reducing the average AoI of the system. However, when the number of servers is increased to a certain number, adding more servers on top of that results in a very limited reduction in AoI. This phenomenon has important implications for balancing performance improvement and resource consumption. The effective reduction of the AoI of systems by a single queue with multiple servers promoted to expand the study to multiple queues with multiple servers\cite{TCP,shuangduilie,mmmd}. Motivated by the multi-path transmission technology,  the authors of \cite{TCP}  analyzed the average AoI of a status-updating network with multiple transmission queues. To further explore the impact of multi-queue transmission on AoI,	the works \cite{shuangduilie} and \cite{mmmd}  studied the synchronous and asynchronous dual-queue status updating system, respectively. It is shown that as the dual-queue transmission can send more packets containing status updates to the destination node, further reduction of the AoI can be obtained compared with just increasing the number of servers in single queue systems\cite{shuangduilie,mmmd}.
	
	While most existing works focused on continuous-time systems, a few recent studies paid close attention to the AoI analysis in discrete-time systems, providing more appropriate queuing modeling for widely adopted synchronized communication systems and time-slotted wireless networks. Some works studied the engineering application of designing discrete-time systems with AoI as the performance indicator\cite{wuren1,wuren2,aloha_multi,aloha_col,aloha_chan,aloha_pa}. The authors of \cite{wuren1,wuren2} considered slot-by-slot-operated unmanned aerial vehicle (UAV) systems and proposed different strategies to plan the paths of UAVs for AoI optimization, respectively. The works in \cite{aloha_multi,aloha_col,aloha_chan,aloha_pa} studied the update strategy and operating protocol in slotted time  ALOHA networks to improve network-level information freshness. The authors of \cite{aloha_multi} investigated the average AoI of time-slotted ALOHA in multi-access networks and adjusted the frame size and protocol-related parameters to improve the overall information freshness. Strategies for data sending in slotted ALOHA  with multi-user conflicts were investigated, and an AoI-oriented decision protocol was proposed to achieve overall multi-user freshness in \cite{aloha_col}. The work in  \cite{aloha_chan}	studied a decentralized transmission strategy for multiple transmitters in a random channel under time-slotted ALOHA with AoI as a metric.	On the other hand, the authors of \cite{aloha_pa} used PAoI as a metric to study the conflict problem caused by multi-node status updates. Meanwhile, several works 
	paid attention to the study of AoI of abstracted discrete-time queue systems\cite{lisanfcfs,lisanber,lisanlcfs,discretestationary,discreteMultiple}. In particular, the average AoI and PAoI of the Ber/Geo/1/1 queue status updates have been derived in closed-form \cite{lisanfcfs}. The authors of \cite{lisanber} further analyzed the distribution of PAoI and AoI in the Ber/Geo/1/1 queue and compared it with the M/M/1/1 queue to explore the correlation between continuous and discrete-time queues.
	The authors of \cite{lisanlcfs} applied the LCFS queuing discipline and packet management policies to discrete-time Ber/Geo/1 queues and derived the expressions for the corresponding average AoI and PAoI. Furthermore,  closed-form expressions for the stationary distribution of AoI and PAoI in discrete-time systems have been obtained for FCFS, LCFS, and packet management queuing disciplines\cite{discretestationary}. The authors of  \cite{discreteMultiple} studied a discrete-time single-queue multi-server system in which the service time and arrival interval of the data are subject to Bernoulli distribution. In particular, the precise distributions of AoI and PAoI are obtained under different preemption strategies and buffer settings in \cite{discreteMultiple}. 

	Considering the frequent application of discrete-time systems in the IoT, modeling and studying the AoI of discrete-time systems can provide theoretical foundations for measuring information freshness in IoT systems. 
	Inspired by the findings in \cite{shuangduilie,mmmd} that the dual-queue transmission can improve the information freshness by sending status updates more frequently, we focus on the information freshness of parallel status updating in more practical discrete-time systems. In particular, we investigate the average AoI and PAoI of a discrete-time dual-queue system where two sensors independently sample the process following zero-wait (ZW) policy, generate status updates, and transmit them to a remote monitor via time-slotted channels. One of the two sensors is modeled as a ZW/Geo/1 queue, where the service time of a status update follows a geometric distribution. The other sensor is modeled as a ZW/D/1 queue, where the service time of a status update is deterministic. We refer to the dual-queue system as the Geo-D system. 		
	
	While the dual-queue transmission is promising in information freshness improvement, the randomness of service time of status updates in the parallel two queues leads to probable unordered arrival at the monitor, making the analysis of AoI and PAoI difficult. To address this problem, we adopt a graphical analysis method to derive the AoI and get analytic expressions for the average AoI and PAoI for the Geo-D system. Based on the results of AoI expressions, the connection between continuous-time and discrete-time systems is further explored.	The main contributions of this work are summarized as follows.
	\begin{enumerate}
		\item The average AoI and PAoI are derived explicitly in the considered Geo-D discrete-time dual-queue system. In particular, we define the state of the system and present the probability of occurrence of each system in the steady state. By characterizing the statistics of the AoI-related terms in each state based on the graphical analysis method, the average AoI and PAoI are obtained in closed-form expressions.

		  \item It is shown that the average AoI and PAoI of the continuous-time counterpart system where the ZW/Geo/1 queue is replaced with a ZW/M/1 queue is just a limiting case of those of the considered Geo-D system, respectively. Similarly, the average AoI of the M-M  system, where the service time of both queues is exponential distributed, can be approached by a series of average AoIs of Geo-Geo systems. When the time slot length tends to infinitesimals,  the analytical expressions for the average AoI and PAoI in discrete-time systems can be extended to continuous-time systems through simple transformations.
		  

		\item Numerical results verify the effectiveness of the analysis. Compared to the ZW/Geo/1 and ZW/D/1  single-queue with the same service rates, our analysis reveals that the Geo-D dual-queue system can provide up to   39.74$\%$ and 19.65$\%$ average AoI reductions and up to   27.71$\%$  average PAoI reductions. In addition, the reduction ratio increases along with the decrease in the service rate. 
		In contrast to the dual-queue system, in terms of average AoI, both the Geo-D dual-queue system and the Geo-Geo dual-queue system outperform the D-D system at the same service rate for both queues. Moreover, Geo-D system performs better at low service rates and Geo-Geo system performs better at high service rates. In terms of average PAoI, while all the dual-queue system outperforms single-queue system, the randomness of service time contributes more AoI reduction than determinacy of service time in dual-queue system in most cases, which is different from what is known about the single queue system.
	\end{enumerate}
	
	The rest of this paper is organized as follows. Section \ref{II} introduces the system model. Section \ref{III} derives the analytical expressions for average AoI and PAoI. Section \ref{VI}  explores the connection between the discrete-time and continuous-time systems in terms of AoI. Simulation results are provided in Section \ref{IV} to evaluate the performance of the Geo-D dual-queue system. Finally, Section \ref{V} concludes this work.

	\section{System Model}\label{II}
 
	\begin{figure}[htp] 
		\centering
		\subfloat{\includegraphics[width=0.9\textwidth]{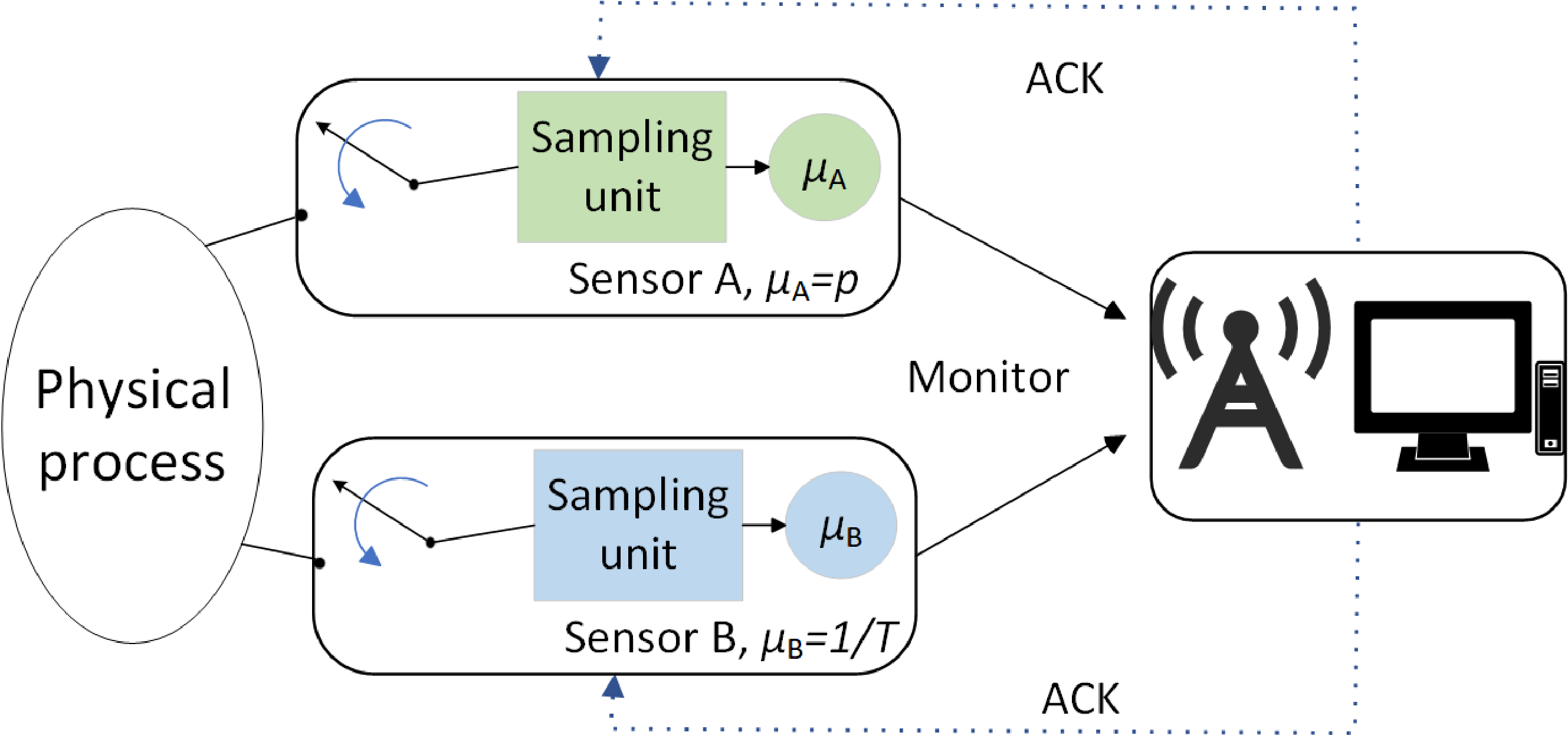}} 
		\caption{The considered dual-queue real-time monitoring system.}
		\label{xitong}
	\end{figure}

	\subsection{Dual-Queue Model}\label{firstanalyze}
	Let us consider a remote monitoring system consisting of two sensors and a monitor. The two sensors independently observe the same physical process and send update packets to the monitor to report its status. 	We assume that  time is divided into slots and that the system is operated slot by slot. Specifically, the duration of each slot is equal to the time required to transmit  one  update packet.
	At the beginning of each slot, the sensor sends  a status update. When the monitor receives a status update, it notifies the corresponding sensor for the generation of the next status update by using acknowledgment signaling (ACK). The transmission of an ACK signaling is considered instantaneous and error-free.  
	Since sending a status update,  the sensor would not generate a new status update until it receives an ACK. This update policy follows the ZW policy, under which the arrival interval time of the status update has been eliminated\cite{zerowait}.

	According to the operation policy, each sensor is modeled as a server, as shown in Fig. \ref{xitong}. 
	We name one of the sensors as sensor A and assume that its service time $S_{\text{A}}$ follows a geometric distribution with parameter $p\in [0,1]$. This models the case when the sensor updates the status through a wireless block fading channel. Accordingly, the service rate of sensor A can be expressed as $\mu_{\text A}=p$. The other sensor is named  sensor B and its service time $S_{\text{B}}$ is assumed to be deterministic with parameter $T\in\mathbb{N}^+$. This is motivated by the case when the sensor is connected to the monitor via a wired or dedicated channel for status updating. Accordingly, the service rate of sensor B  is expressed as $\mu_{\text {B}}=1/T$.   	
	Under the ZW policy, the two queues whose service time obeys geometric and deterministic distributions are referred to as a ZW/Geo/1 queue and a ZW/D/1 queue, respectively.
	These two parallel queues together form a dual-queue system, which is referred to as a Geo-D system.

	\subsection{Performance Metric}\label{perf_metr}
	The AoI at the monitor is denoted as the stochastic process
	\begin{equation}
		{\rm\Delta[\emph{t}]}=t-u(t), \,~~t\in\mathbb{Z},
	\end{equation}where $u(t)$ is the time-stamp of the most recently received
	status update corresponding to time slot $t$. According to the discrete Geo-D dual-queue model, the AoI evolution is expressed as
	\begin{equation}\label{dingyi}
		\rm\Delta[\emph{t}+1]=\left\{
		\begin{aligned}
			&\rm\Delta[\emph{t}]+1,\,\text{ no update has been served at \emph{t},}   \\
			&\min(t-\emph{$G_i$}+1,\rm\Delta[\emph{t}]+1),\,~~~~~\text{otherwise,}
		\end{aligned}
		\right.
	\end{equation}
	where $G_i$ is the  time that  $i$-th status update is generated. 
	Note from \eqref{dingyi} that, only if $\emph{t}-G_i < \rm\Delta[\emph{t}]$  holds \footnote{If the monitor receives two status updates at the same time and both status updates satisfy $\emph{t}-G_i < \rm\Delta[\emph{t}]$,  the monitor selects the one with a small AoI as a fresh status update and the other an obsolete status update. If $G_i$'s of the two status updates are the same, the monitor randomly selects one from the two as fresh and the other obsolete.}, would  $i$-th status update refresh the AoI at the monitor end to a small value equal to $t-G_i+1$. We call the status updates that satisfy  the condition ${t}-G_i < {\rm \Delta}[t]$ as fresh or valid status updates and the rest as obsolete or stale  status updates.

	Based on the service rate of the two sensors, the Geo-D system has two boundary cases in terms of AoI.
	For the first case, there is at least one server whose service rate reaches the upper limit, i.e., $\mu_{\text A}=1$ or $\mu_{\text {B}}=1$. In this case, the evolution of AoI of the Geo-D system is realized exclusively by the queue with a service rate equal to 1. Accordingly,  the values of AoI and PAoI would be a constant 2.
	For the second case, there exists one queue whose service rate is zero, i.e., $\mu_{\text A}=0$ or $\mu_{\text {B}}=0$ ($0<\mu_{\text A}+\mu_{\text B}<1$). In this case, the dual-queue system degenerates into a single-queue system, and the AoI and PAoI of the system  depend on the remaining queue. 
	In these two cases, the dual-queue system performs the same as one of the two parallel queues regarding information freshness. Therefore, in the sequel, we analyze
	the service rate within the boundary conditions, i.e., $0 < \mu_{\text {A}},\,\mu_{\text {B}} < 1$, in order to investigate how the two parallel queues affect the AoI evolution of the system, in  non-trivial cases. To  clarify, the AoI, the average AoI, and the average PAoI of a specific queue model $\psi$ are denoted by $\Delta_\psi$, $\overline{\Delta}_\psi$, and $\overline{\Delta}^{\text{p}}_\psi$, respectively. 

	
	
		\section{ AoI of the Geo-D System }\label{III}
		In this section, we derive the AoI of the Geo-D system based on the graphical analysis method. 
		Due to the randomness of the service time of sensor A, status updates successfully served by both sensors A and B would probably be obsolete. This randomness makes it challenging to  analyze the AoI. We consider classifying the system into different states by examining the possible AoI evolution pattern at the monitor end.
  Then, we analyze and calculate the  AoI and PAoI of each state and derive the closed-form expressions for the average AoI and PAoI of the Geo-D system. Finally, we discuss the improvement of the information freshness brought by the redundancy of each sensor in the Geo-D system.

		\subsection{System State Analysis}\label{state}
		

		It is valuable to note that the service time of each status update from sensor B is deterministic, i.e., $T$ slots. Accordingly, the AoI evolution of 
		status updates served by sensor B presents a periodical staircase-like waveform. Fig. \ref{zhuangtai}  presents some typical examples of the AoI evolution path in two consecutive periods.
		The dotted and  solid curves in  Fig. \ref{zhuangtai} represent the evolution of AoI of status updates from sensor A and sensor B, respectively. The shaded area represents the evolution of the AoI of the monitor in the current period.
		
	The AoI of the system, which is the minimum of the AoI of updates coming from sensor A and that coming from sensor B, forms a random staircase-like AoI  and is affected by the number of updates successfully served by sensor A in the current period and the previous period (cf. Fig. \ref{zhuangtai}). This motivates us to define the state of the system based on the number of successfully served status updates coming from sensor A in each period of $T$ slots. Specifically, let $K$ and $N$ denote the number of state updates served by sensor A in the previous period and the current period, respectively. 
		%
		According to the values of $K$ and $N$, we say the Geo-D system is in state $(k,n)$.
		Since the service time of sensor A follows a geometric distribution with parameter $p$, the number of status updates served by sensor A during $T$ slots follows a binomial distribution, i.e., 	$K \sim { \textsf{B}} (T,p)$ and  $N \sim { \textsf{B}} (T,p)$. The probability of the Geo-D system in a specific state $(k, n)$ can be expressed as 		
		\begin{equation}\label{2}
			\mathbb{P}_{K,N}(k,n)=\binom{T}{k}p^k(1-p)^{T-k}\binom{T}{n}p^n(1-p)^{T-n}.
		\end{equation}\par 	
		
		
		Now, let us specify each state and explain how the number of status updates sent by sensor A in the previous and current periods affects the number of valid status updates received by the monitor in the current period, i.e., $V(k,n)$. 
		
		\begin{figure*}[htp] 
			\subfloat[state $(0,0)$\label{new00}]{\includegraphics[width=0.32\textwidth]{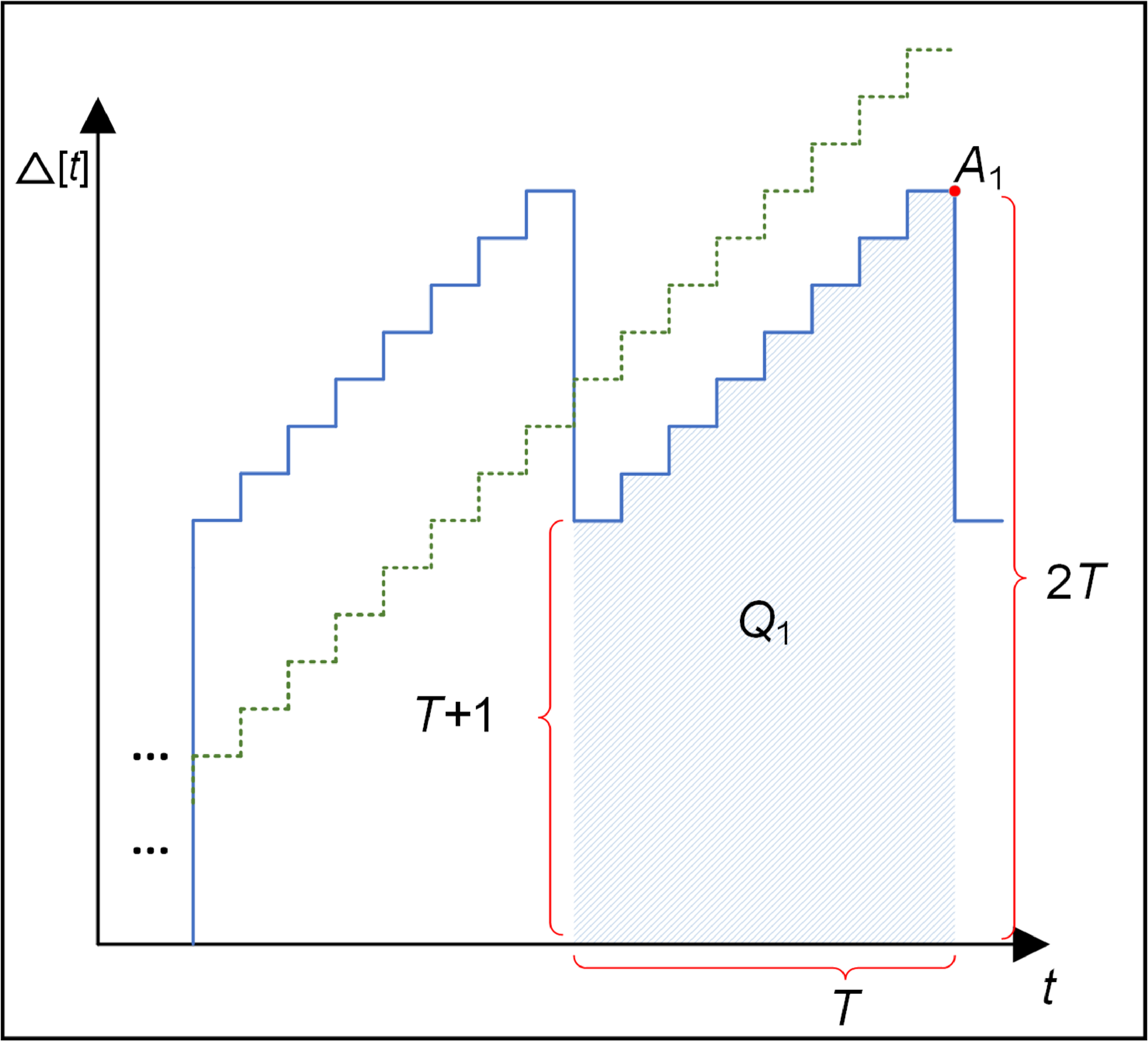}} 
			\hfill 	
			\subfloat[state $(0,1)$\label{new01}]{\includegraphics[width=0.32\textwidth]{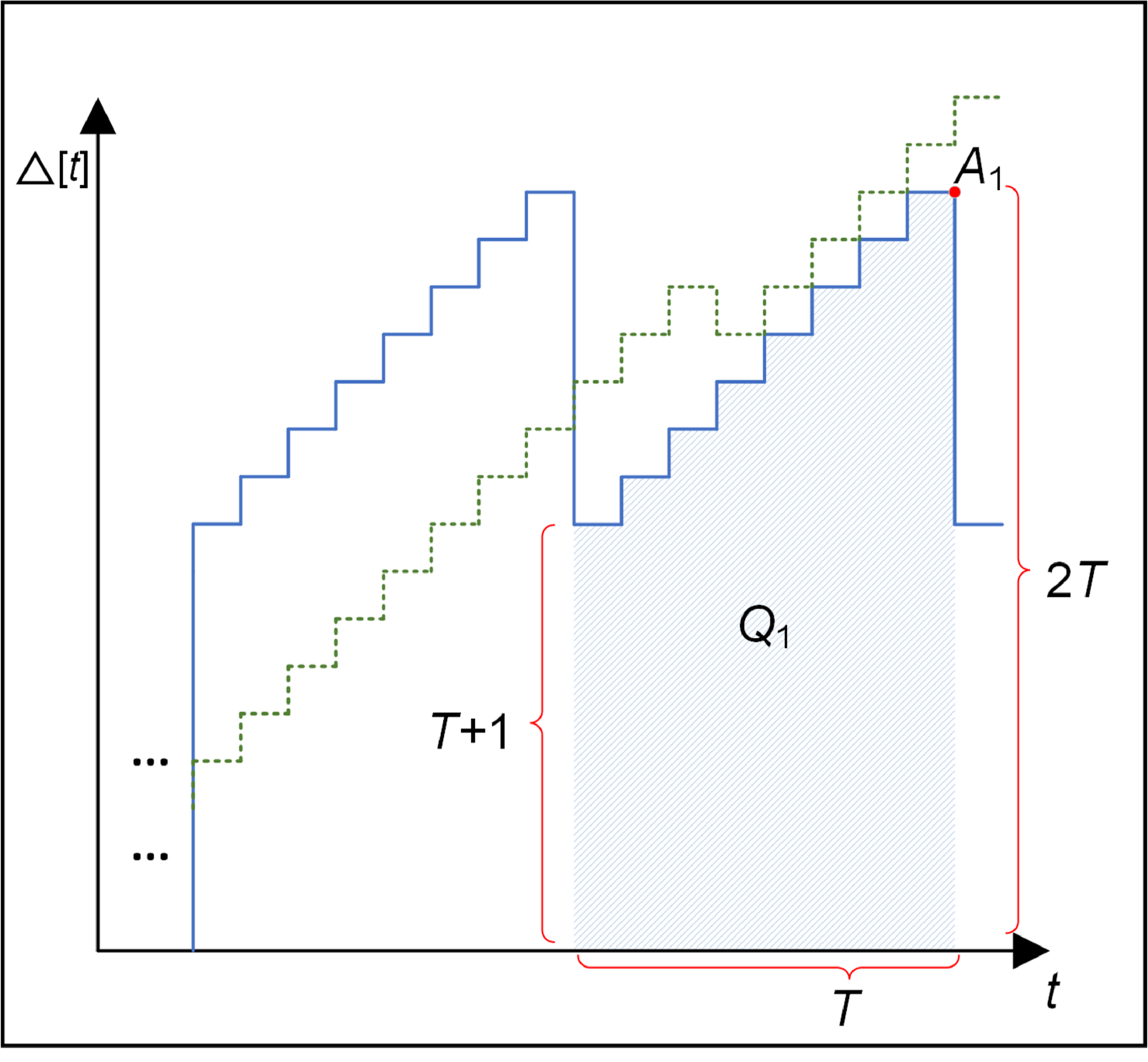}}  
			\hfill	
			\subfloat[state $(0,n),~2\leq n\leq T$\label{new0n}]{\includegraphics[width=0.32\textwidth]{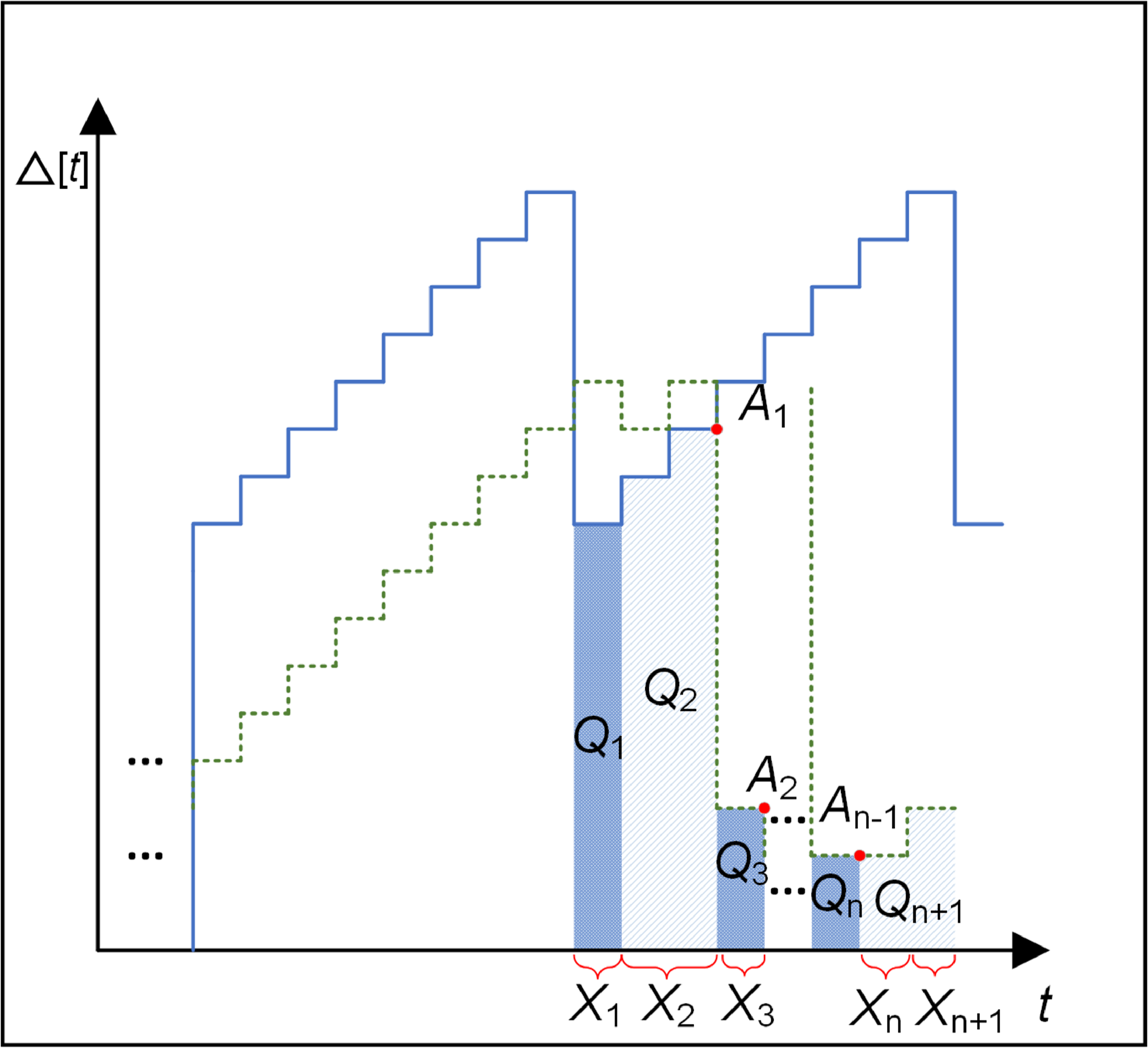}}
			\newline
			\subfloat[state $(k,0),~1\leq k\leq T$\label{newk0}]{\includegraphics[width=0.32\textwidth]{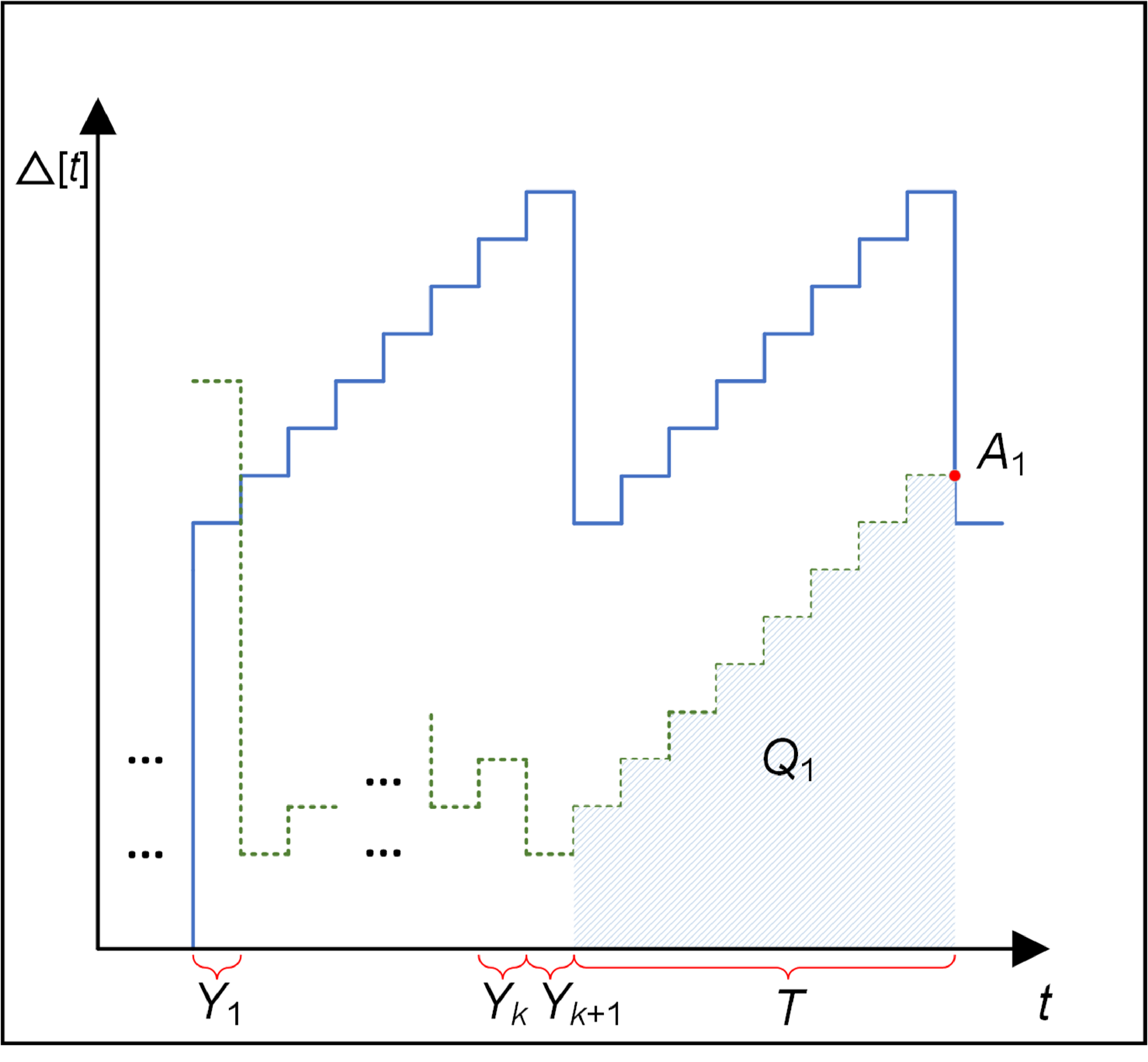}}
			\hfill
			\subfloat[state $(k,1),~1\leq k\leq T$\label{newk1}]{\includegraphics[width=0.32\textwidth]{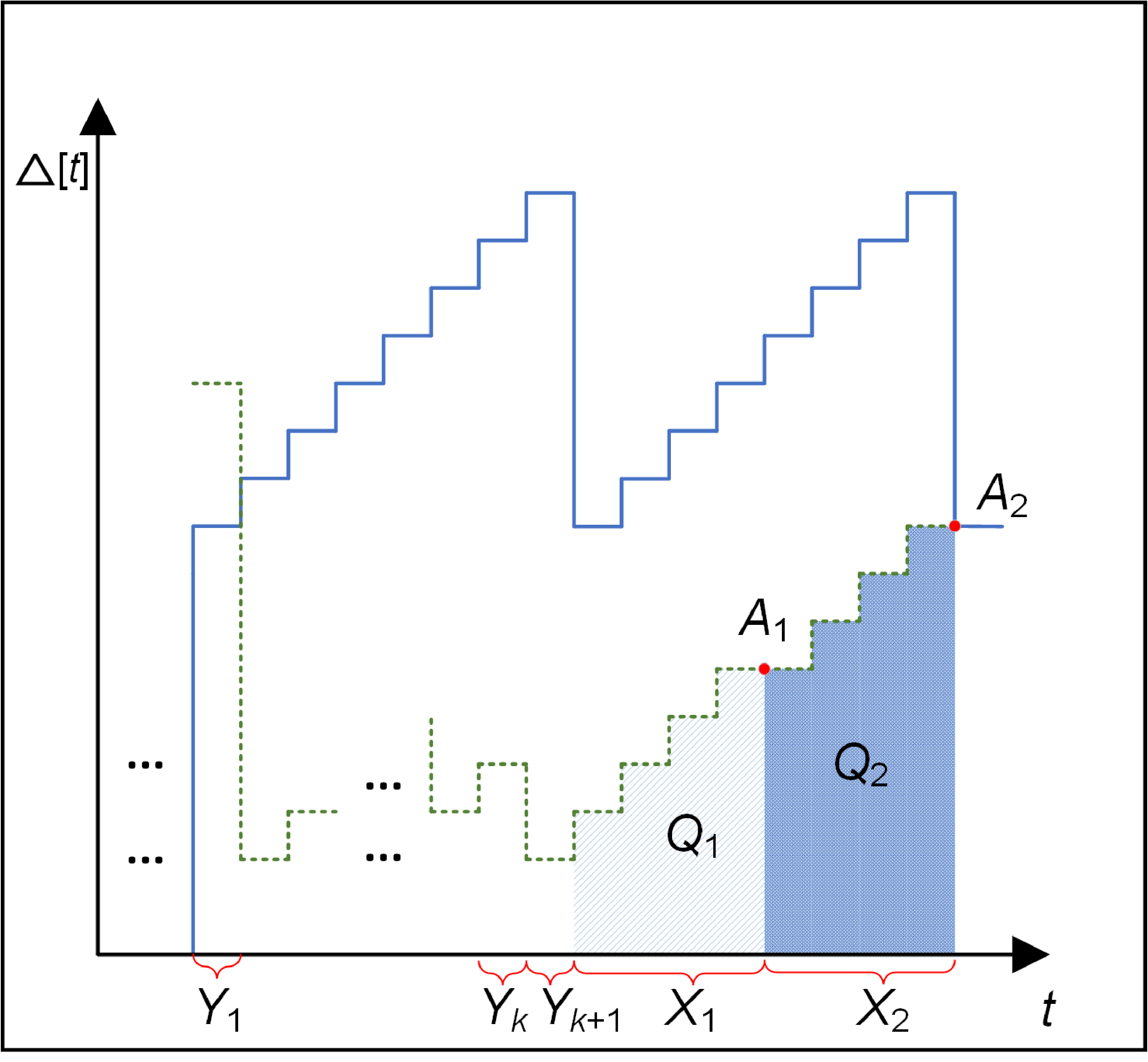}}
			\hfill	
			\subfloat[state $(k,n),~1\leq k\leq T,~2\leq n\leq T$\label{newkn}]{\includegraphics[width=0.32\textwidth]{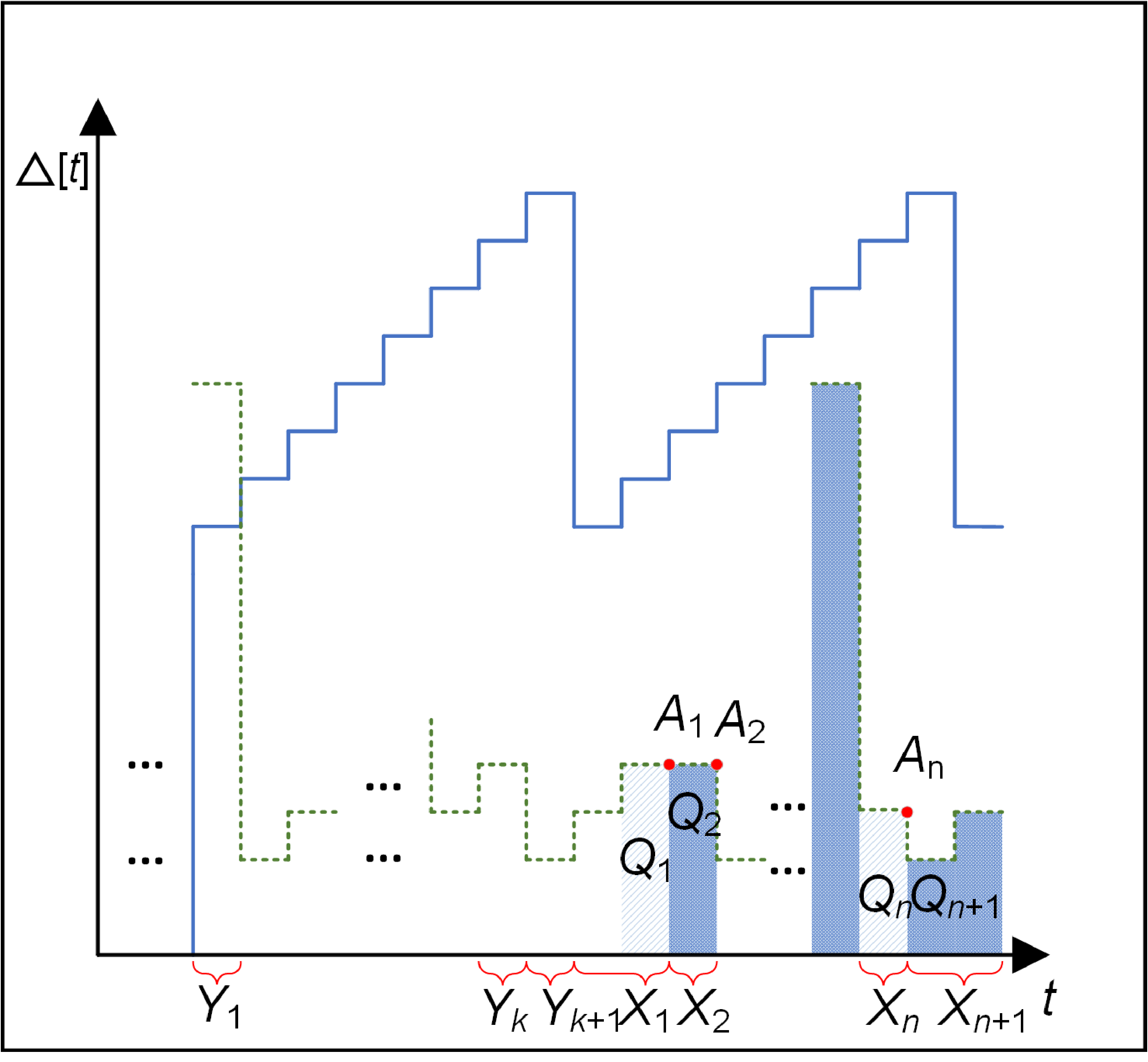}}
			\caption{Evolution examples of  AoI for the Geo-D system in each state.}
			\label{zhuangtai}
		\end{figure*}
		\subsubsection{State $(0,0)$} During the previous  period and the current  period, no  status update has been successfully served by sensor A. Note from Fig. \ref{zhuangtai}\subref{new00} that the monitor would  receive one valid status update from sensor B at the $T$-th slot of the current period. Then, $V(0,0)=1$.
		\subsubsection{State $(0,1)$}
		As shown in Fig. \ref{zhuangtai}\subref{new01}, there is no  status update and only one status update successfully served by Sensor A in the previous and the current periods, respectively.  
		Since  no fresh status update was successfully served during the previous period, the update  served by sensor A would be staler than the freshest status update at the monitor, i.e., the one successfully served by sensor B at the end  of the previous period. Hence, the monitor only receives one valid status update from sensor B at the end of the current period, i.e., $V(0,1)=1$.
		\subsubsection{State $(0,n)$, $2\leq n\leq T$}
		Similar to state $(0,1)$, the first status update sent by sensor A in the current period is obsolete. 
		However, the other $n-1$ status updates that arrived in the current period contribute to the refreshing of the AoI of the monitor, as shown in  Fig. \ref{zhuangtai}\subref{new0n}.
		Hence, the status update from sensor B served at $T$-th slot is  stale and does not refresh the AoI of the monitor. Therefore, the monitor  receives $n-1$ valid status updates in the current period. That is, $V(0,n)=n-1$.
		\subsubsection{State $(k,0)$, $1\leq k\leq T$}Sensor A sent $k$  status updates in the previous period, yet it does not send any new  status update in the current period (cf. Fig.\ref{zhuangtai}\subref{newk0}). In this case, the monitor only receives one valid status update from sensor B in the period of interest. That is, $V(k,0)=1$.  
		\subsubsection{State $(k,1)$, $1\leq k\leq T$}
		Different from state $(k,0)$, there is a successfully served status update of sensor A arrived at the monitor in the current period, as shown in Fig. \ref{zhuangtai}\subref{newk1}.		
		In particular, there is a probability of $(1/T)$ that sensor A completes the service of the only successfully served update at the $T$-th  slot, i.e., simultaneously with that successfully served by sensor B. By definition,   the one coming from sensor A is valid, while the one from sensor B is obsolete.			
		Therefore, in this case, the monitor receives only one valid status update from sensor A, i.e., $V(k,1)=1$. It is valuable to note that in the rest cases, sensor A completes its service before sensor B. Hence the monitor receives two valid  status updates, one from sensor A and the other from sensor B with probability $(1-1/T)$, i.e., $V(k,1)=2$. 
		A typical example for the case where $V(k,1)=2$ can be found in Fig. \ref{zhuangtai}\subref{newk1}.
		\subsubsection{State $(k,n)$, $1\leq k\leq T$, $ 2\leq n\leq T$}
		As shown in Fig. \ref{zhuangtai}\subref{newkn}, in the current period, all $n$ status updates from sensor A are valid, while the one from sensor B is obsolete. Hence, the monitor receives  $n$ valid status updates in total. That is, $V(k,n)=n$.		
		\subsection{Average AoI and PAoI of the Geo-D System} \label{calculation}
		Based on the system states, we can compute the AoI and PAoI of each state and those of the Geo-D system. To characterize the AoI evolution in each state, let us define the following random variables.
		Let $X_i\,(i=1, 2, \cdots, n)$ and  $Y_j\,(j=1, 2, \cdots, k)$ denote the service time of the $i$-th and $j$-th  status update from sensor A  within the current and the previous period for state $(k,n)$, respectively (cf. Fig. \ref{zhuangtai}). 
		In the current  and previous  period , the remaining time after sensor A has successfully sent $n$ and $k$   status updates are denoted by $X_{n+1}=T-\sum_{i=1}^{n}X_i$ and $Y_{k+1}=T-\sum_{j=1}^{k}Y_j$, respectively. With these definitions, the expected sum of PAoI and  that of AoI in  each state $(k,n)$ can be calculated by the following equations (also cf. Fig. \ref{zhuangtai}).
		\begin{align}
			\mathbb{E}[A|(k,1)]=&\frac{1}{T}\mathbb{E}[A_1|(k,1)]+\notag\\
			&({1\!-\!\frac{1}{T}})\sum_{i=1}^{2}\mathbb{E}[A_i|(k,1)], \quad1\leq k\leq T,\label{EAkn}\\
			\mathbb{E}[A|(k,n)]=&\sum_{i=1}^{V(k,n)}\mathbb{E}[A_{i}(k,n)], ~~~~ n\neq 1 \text{~or~} k=0,\\
			\mathbb{E}[Q|(k,n)]=&\sum_{i=1}^{n+1}\mathbb{E}[Q_{i}(k,n)], 
		\end{align}
		where  $\mathbb{E}[A_{i}(k,n)]$ and $\mathbb{E}[Q_{i}(k,n)]$ respectively represent the $i$-th component AoI peak and area of  the right angled trapezoid with staircase-like hypotenuse (cf. the shaded areas marked with $A_i$ and $Q_i$ in Fig. \ref{zhuangtai}) in state $(k,n)$.
		In  particular, when analyzing state $(k,n)$, $A_{i}(k,n) $ and $ Q_{i}(k,n)  $ are simply written as $A_i$ and $Q_i$.
		
  The average AoI of the Geo-D system is obtained by summing the expectations for each state.
		Note that the average AoI under each state is the staircase-like waveform coverage area $Q$ divided by the period time $T$.
		Then, the average AoI of the Geo-D system can be  obtained as
		\begin{align}       
			{\overline\Delta_{\text{Geo-D}}}&=\frac{1}{T}\sum_{k=0}^{T}\sum_{n=0}^{T} \mathbb{P}_{K,N}(k,n)  \cdot  {\mathbb{E}[Q|(k,n)]}.
			\label{AoI}
		\end{align}
		Note that the average number of valid status updates of the Geo-D system  in a period can be calculated by
		\begin{equation}
			\label{en}
			\mathbb{E}[V]=\sum_{k=0}^{T}\sum_{n=0}^{T}\mathbb{P}_{K,N}(k,n) \mathbb{E}[V(k,n)],       
		\end{equation} 
		where 
		$V(k,n)$ is given by the specific explanation of each state in Sec. III-A. Accordingly, the average PAoI of the Geo-D system can be calculated through dividing the sum of PAoI by the average number of valid status updates, i.e.,
		\begin{equation}
			{\overline{\Delta}^{\text{p}}_{\text{Geo-D}}}= \frac{\mathbb{E}[A]}{\mathbb{E}[V]}=\frac{\sum_{k=0}^{T}\sum_{n=0}^{T} 	\mathbb{P}_{K,N}(k,n)  \mathbb{E}[A|(k,n)]}{\sum_{k=0}^{T}\sum_{n=0}^{T}\mathbb{P}_{K,N}(k,n) \mathbb{E}[V(k,n)]}.\label{PAoI}
		\end{equation}\par

		
		According to \eqref{EAkn}-\eqref{PAoI}, to derive average AoI and PAoI, it only remains to compute the expectations of $A_i$ and $Q_i$ for each state. To this end, let us introduce the following lemma.
		\begin{lemma} \label{ol1}
			Consider two integers $n,~T\in \mathbb{N^+}$ satisfying $1\leq n \leq T$. It holds that
			\begin{align}
				\sum_{x_i=1}^{b_i}\sum_{x_{i+1}=1}^{b_{i+1}}\cdots\sum_{x_n=1}^{b_n} 1=&\binom{b_i+n-i}{n-i+1}, \label{base1} \\
				\sum_{x_1=1}^{b_1}\sum_{x_2=1}^{b_2}\cdots\cdots\sum_{x_n=1}^{b_n}x_i=&\binom{T+1}{n+1}, \label{lemma1} \\
				\sum_{x_1=1}^{b_1}\sum_{x_2=1}^{b_2}\cdots\cdots\sum_{x_n=1}^{b_n}x_i^2 =&\binom{T+2}{n+2}+\binom{T+1}{n+2}, \label{lemma2} \\
				\sum_{x_1=1}^{b_1}\sum_{x_2=1}^{b_2}\cdots\cdots\sum_{x_n=1}^{b_n}x_ix_j=&\binom{T+2}{n+2}, \label{lemma3}
			\end{align}
			where $1\leq i,j \leq n$, $i\neq j$,  and
			\begin{align}\label{bi}
				b_i:=T-(n-i)-\sum_{l=1}^{i-1}x_l, ~~~~ 1\leq i\leq n.
			\end{align}
		\end{lemma}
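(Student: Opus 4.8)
The plan is to first pin down the region of summation, then dispatch \eqref{base1} by a short induction, and finally deduce \eqref{lemma1}--\eqref{lemma3} from it by a ``layer-cake'' expansion of each monomial into indicator functions. Unwinding \eqref{bi}, the inequalities $1\le x_l\le b_l$ for $l=1,\dots,n$ are jointly equivalent to $x_l\ge 1$ for all $l$ together with $x_1+\cdots+x_n\le T$; hence the nested sum in \eqref{lemma1}--\eqref{lemma3} runs over the permutation-symmetric lattice simplex $\mathcal R_T^{(n)}:=\{(x_1,\dots,x_n)\in(\mathbb{N}^+)^n:\ x_1+\cdots+x_n\le T\}$, and the sum in \eqref{base1}, with $x_1,\dots,x_{i-1}$ held fixed, runs over a copy of $\mathcal R^{(n-i+1)}$ with budget $b_i+(n-i)$. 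Throughout I would lean on two elementary facts: the hockey-stick identity $\sum_{c=r}^{N}\binom{c}{r}=\binom{N+1}{r+1}$ and the absorption identity $(N+1)\binom{N}{r}=(r+1)\binom{N+1}{r+1}$.

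\textbf{Establishing \eqref{base1}.} I would induct on the number of free variables $n-i+1$ (equivalently, downward on $i$). The case $i=n$ is $\sum_{x_n=1}^{b_n}1=b_n=\binom{b_n}{1}$. For the inductive step, apply the claim to the inner $n-i$ sums; since $b_{i+1}=b_i+1-x_i$ by \eqref{bi}, this leaves $\sum_{x_i=1}^{b_i}\binom{b_i-x_i+n-i}{n-i}$, and the substitution $c=b_i-x_i$ followed by the hockey-stick identity yields $\binom{b_i+n-i}{\,n-i+1\,}$. Two consequences I would record for later use: the full count is $|\mathcal R_T^{(n)}|=\binom{T}{n}$ (the $i=1$ case), and, by the shift $x_i\mapsto x_i-(m-1)$, $\#\{x\in\mathcal R_T^{(n)}:\ x_i\ge m\}=\binom{T-m+1}{n}$, which is $0$, consistently, once $T-m+1<n$.

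\textbf{Establishing \eqref{lemma1}--\eqref{lemma3}.} Write $x_i=\sum_{a\ge1}\mathbf{1}\{x_i\ge a\}$, $x_ix_j=\sum_{a,b\ge1}\mathbf{1}\{x_i\ge a\}\mathbf{1}\{x_j\ge b\}$ for $i\ne j$, and $x_i^2=\sum_{a,b\ge1}\mathbf{1}\{x_i\ge\max(a,b)\}$. Summing over $\mathcal R_T^{(n)}$ and swapping the order of summation turns each inner sum into one of the counts recorded above: \eqref{lemma1} becomes $\sum_{a\ge1}\binom{T-a+1}{n}=\sum_{c=n}^{T}\binom{c}{n}=\binom{T+1}{n+1}$; \eqref{lemma3} becomes, after shifting both coordinates, $\sum_{a,b\ge1}\binom{T-a-b+2}{n}$, which, grouped by $w=(a-1)+(b-1)$, equals $\sum_{w\ge0}(w+1)\binom{T-w}{n}$ and collapses via hockey-stick and absorption to $\binom{T+2}{n+2}$; \eqref{lemma2} becomes $\sum_{M\ge1}(2M-1)\binom{T-M+1}{n}$ (grouping by $M=\max(a,b)$, of which there are $2M-1$ pairs), which reduces to $(2T+3)\binom{T+1}{n+1}-2(n+1)\binom{T+2}{n+2}$ and is then rewritten as $\binom{T+2}{n+2}+\binom{T+1}{n+2}$ using one Pascal step and the absorption identity $(T+2)\binom{T+1}{n+1}=(n+2)\binom{T+2}{n+2}$. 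An alternative, closer to how these sums arise in the $Q_i$-area computations, is to peel off the outermost sum over $x_1$: fixing $x_1$ reduces the sum to the same one with $n\to n-1$, $T\to T-x_1$, and indices shifted down by one, giving a recursion in $n$ whose base cases $i=1$ follow from \eqref{base1} together with the convolution $\sum_{x_1}\binom{x_1}{2}\binom{T-x_1}{n-1}=\binom{T+1}{n+2}$.

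The part needing care is the bookkeeping in the last step: correctly counting the pairs $(a,b)$ with a prescribed $\max(a,b)$ or $a+b$ in the $x_i^2$ and $x_ix_j$ cases, keeping the degenerate ranges aligned (every binomial must vanish exactly when the count it represents is empty), and carrying the resulting one-parameter binomial sums through hockey-stick, Pascal, and absorption without dropping a term. None of this is deep, but each simplification is an easy place to slip, so I would verify the final forms against small values of $n$ and $T$.
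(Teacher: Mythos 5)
Your proposal is correct, and for the monomial sums it takes a genuinely different route from the paper. The paper first symmetrizes the nested sums by reverse-order summation, peels off the inner sums to reduce each quantity to a one- or two-fold weighted binomial sum such as $\sum_{x_1}\binom{T-x_1}{n-1}x_1^2$ or $\sum_{x_1}\sum_{x_2}\binom{T-x_1-x_2}{n-2}x_1x_2$, and then evaluates these by successive differencing and repeated hockey-stick steps; the symmetry in the index $i$ (and the pair $i,j$) is asserted rather than made structural. You instead observe at the outset that the constraints $1\le x_l\le b_l$ are jointly equivalent to $x_l\ge 1$ and $x_1+\cdots+x_n\le T$, so the summation region is the permutation-symmetric simplex $\mathcal R_T^{(n)}$ with $|\mathcal R_T^{(n)}|=\binom{T}{n}$, which both disposes of the index-independence for free and reduces \eqref{lemma1}--\eqref{lemma3} to counting shifted copies of the same simplex via the layer-cake expansions $x_i=\sum_a\mathbf 1\{x_i\ge a\}$, $x_ix_j=\sum_{a,b}\mathbf 1\{x_i\ge a\}\mathbf 1\{x_j\ge b\}$, and $x_i^2=\sum_{a,b}\mathbf 1\{x_i\ge\max(a,b)\}$. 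Your treatment of \eqref{base1} (downward induction on $i$ with one hockey-stick step, using $b_{i+1}=b_i+1-x_i$) is essentially the paper's recursion in different clothing, and your closing remark about peeling the outermost sum recovers the paper's method as a special case. I checked your intermediate forms: the reduction of \eqref{lemma2} to $(2T+3)\binom{T+1}{n+1}-2(n+1)\binom{T+2}{n+2}$ does equal $\binom{T+2}{n+2}+\binom{T+1}{n+2}=2\binom{T+2}{n+2}-\binom{T+1}{n+1}$ via the absorption identity $(T+2)\binom{T+1}{n+1}=(n+2)\binom{T+2}{n+2}$, and the grouped sum $\sum_{w\ge0}(w+1)\binom{T-w}{n}$ for \eqref{lemma3} collapses to $\binom{T+2}{n+2}$ as you claim. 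Your approach buys uniformity (all three identities follow from the single count $\binom{T-m+1}{n}$) and transparency of the symmetry; the paper's buys a proof that stays closer to the form in which these sums actually appear in the $\mathbb{E}[Q_i]$ computations of Theorem~1.
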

		\begin{proof}
			Please refer to Appendix \ref{A} for the proof.
		\end{proof}
		
		Based  on  Lemma \ref{ol1}, we have the following result.
		\begin{Theorem}\label{Pro1}
			In the Geo-D dual-queue parallel transmission system, where $0\, \leq\, \mu_{\text A}=p\leq1$, $q:=1-p$, and $\mu_{\text {B}}=1/T$, $T\in \mathbb{N^+}$, the average PAoI and average AoI are, respectively,
			\begin{align}\label{expaoi}
				{\overline{\Delta}^{\text p}_{\text{Geo-D}}}=&\frac{1}{q^{2T\!-\!1}\!+\!q^{T\!-\!1}(T\!-\!1)p\!+\!Tp}\bigg[2T\!+\!q^{2T\!-\!1}\Big[\frac{2}{p}\!-\!(\frac{p}{2}\nonumber \\
				&\!-\!2)(T\!-\!1)\Big]\!+\!q^{T\!-\!1}\Big(2\!-\!\frac{2}{p}\!-\!\frac{p}{2}\!-\!\frac{Tp}{2}\!+\!T^2p\Big)   \bigg], \\
				\label{exaoi}
				{ \overline{\rm \Delta}_\text{Geo-D}}=&\frac{2}{p}\!+\!\frac{q^T}{p}\Big(\!-\!1\!+\!\frac{2}{T}\!-\!\frac{3}{pT}\Big)\!+\!\frac{q^{2T}}{p}\Big(2\!-\!\frac{2}{T}\!+\!\frac{3}{pT}\Big).
			\end{align}
		\end{Theorem}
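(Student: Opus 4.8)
The plan is to reduce the proof of Theorem~\ref{Pro1} to the bookkeeping problem of evaluating, for every state $(k,n)$, the two conditional expectations $\mathbb{E}[Q\mid(k,n)]$ and $\mathbb{E}[A\mid(k,n)]$ appearing in \eqref{AoI} and \eqref{PAoI}, and then assembling the binomial-weighted double sums. First I would fix a generic state $(k,n)$ with $1\le k\le T$ and $2\le n\le T$ and write down, from the geometry of Fig.~\ref{zhuangtai}\subref{newkn}, the $i$-th trapezoidal area $Q_i$ and the $i$-th peak $A_i$ as explicit quadratic, respectively affine, functions of the service times $X_1,\dots,X_n$ (with $X_{n+1}=T-\sum_{i=1}^n X_i$); each $Q_i$ is the area of a right-angled trapezoid of the form $\frac12 X_i(h_i+h_i+X_i)=X_i h_i+\tfrac12 X_i^2-\tfrac12 X_i$ where the starting height $h_i$ is itself a partial sum of the $X_l$'s, so $Q_i$ expands into a sum of terms of the types $1$, $X_i$, $X_i^2$, $X_i X_j$. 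Crucially, conditioned on $N=n$, the vector $(X_1,\dots,X_n)$ is \emph{uniformly distributed} over the lattice simplex $\{x_i\ge 1,\ \sum_{i=1}^n x_i\le T\}$ (the geometric service times become uniform given their count over a deterministic window), so conditional expectations are just the sums in Lemma~\ref{ol1} divided by the normalizing count $\binom{T}{n}=\binom{b_1+n-1}{n}$ from \eqref{base1}. Substituting \eqref{lemma1}, \eqref{lemma2}, \eqref{lemma3} then collapses $\mathbb{E}[Q\mid(k,n)]$ and $\mathbb{E}[A\mid(k,n)]$ into closed forms in $n$ and $T$ alone, independent of $k$.

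Next I would dispatch the boundary states separately, since their AoI paths differ structurally: state $(0,0)$ and $(0,1)$ each contribute a single sensor-B trapezoid of fixed area (a full $T\times$height block plus the $\tfrac12 T^2-\tfrac12 T$ triangle correction) and $V=1$; states $(k,0)$ contribute likewise with $V=1$; state $(0,n)$ with $2\le n\le T$ loses its first A-update so its valid count is $n-1$ and the first trapezoid is the long stale one of width equal to the first service time; and state $(k,1)$ needs the $1/T$ versus $1-1/T$ case split of \eqref{EAkn}, where with probability $1/T$ the A-update coincides with the B-update at slot $T$ (one valid update) and otherwise there are two. For the PAoI numerator I would track $\mathbb{E}[A\mid(k,n)]=\sum_{i=1}^{V(k,n)}\mathbb{E}[A_i\mid(k,n)]$ with $A_i$ the height of the $i$-th step, again an affine function of the $X$'s handled by \eqref{lemma1}. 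Throughout, the binomial weights $\mathbb{P}_{K,N}(k,n)$ from \eqref{2} factor as a product over $K$ and $N$; since the state-conditional quantities depend only on $n$ (except in $(k,1)$, where the dependence is still only through $\mathbf 1\{k\ge1\}$), the $k$-sum reduces via $\sum_{k}\binom{T}{k}p^k q^{T-k}=1$ and $\sum_{k\ge1}(\cdots)=1-q^T$, isolating clean factors of $1$, $q^T$, and $q^{2T}$.

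The final step is the summation over $n$: one needs identities such as $\sum_{n=0}^T\binom{T}{n}p^nq^{T-n}=1$, $\sum_{n}n\binom{T}{n}p^nq^{T-n}=Tp$, $\sum_{n}n^2\binom{T}{n}p^nq^{T-n}=Tp(1-p)+T^2p^2$, together with their analogues weighted by the $\binom{T+1}{n+1}$-type factors coming out of the lemma (these rewrite as $\sum_n \binom{T+1}{n+1}p^nq^{T-n}=\tfrac1p(1-q^{T+1})-$ lower terms, etc.). Collecting the coefficients of $q^{2T-1}$, $q^{T-1}$, and the constant, and then multiplying through by $1/T$ for the AoI (per \eqref{AoI}) or dividing by $\mathbb{E}[V]=q^{2T-1}+q^{T-1}(T-1)p+Tp$ for the PAoI (per \eqref{PAoI}, after computing $\mathbb{E}[V]$ from \eqref{en} state-by-state), yields exactly \eqref{expaoi} and \eqref{exaoi}. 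The main obstacle I anticipate is not any single sum but the sheer volume of careful case analysis: getting the trapezoid heights $h_i$ and the valid-update counts right for the first step in the A-queue and for the B-queue update at slot $T$ in each of the six state families, and then tracking signs through the $-\tfrac12 X_i$ corrections and the coincidence-probability split in state $(k,1)$, so that the $q^{T-1}$ and $q^{2T-1}$ coefficients assemble without error. I would organize this as a lemma computing $\mathbb{E}[Q\mid(k,n)]$, $\mathbb{E}[A\mid(k,n)]$, $V(k,n)$ for each family, then a short aggregation computation, pushing the heavier algebra to an appendix.
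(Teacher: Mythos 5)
Your overall strategy coincides with the paper's: the same state decomposition $(k,n)$, the same observation that conditioned on $N=n$ the service-time vector is uniform over the lattice simplex so that conditional expectations reduce to the nested sums of Lemma~\ref{ol1} divided by $\binom{T}{n}$, and the same final aggregation over the binomial weights with division by $T$ for the AoI and by $\mathbb{E}[V]$ for the PAoI. The boundary-state bookkeeping you describe --- $V(0,n)=n-1$, the $1/T$ versus $1-1/T$ split in state $(k,1)$ --- also matches the paper.

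However, one step as you state it would fail: you claim that for the generic state ($1\le k\le T$, $2\le n\le T$) the closed forms for $\mathbb{E}[Q\mid(k,n)]$ and $\mathbb{E}[A\mid(k,n)]$ come out "in $n$ and $T$ alone, independent of $k$," so that the $k$-sum collapses via $\sum_k\binom{T}{k}p^kq^{T-k}=1$. This is not so. The first two peaks of the current period are $A_1=Y_k+Y_{k+1}+X_1$ and $A_2=Y_{k+1}+X_1+X_2$, i.e., the starting heights of the first two trapezoids are inherited from the tail of the \emph{previous} period, not from partial sums of the $X_l$'s alone. Averaging $Y_k$ and $Y_{k+1}=T-\sum_j Y_j$ over the uniform conditional law given $K=k$ (again via Lemma~\ref{ol1}) produces factors such as $(T+1)/(k+1)$ and $k(T+1)/(T(k+1))$; the paper's resulting expressions, e.g. $\mathbb{E}[A\mid(k,n)]=\bigl(k[(2n-1)T-3]+n(5T+3)+2T\bigr)/\bigl((k+1)(n+1)\bigr)$, are genuinely $k$-dependent. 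Consequently the $k$-sum requires nontrivial identities such as $\sum_{k}\binom{T}{k}p^kq^{T-k}\frac{1}{k+1}=\frac{1-q^{T+1}}{(T+1)p}$ (and its $k/(k+1)$ companion), and these contribute to the $q^{T-1}$ and $q^{2T-1}$ coefficients in \eqref{expaoi} and \eqref{exaoi}. Carrying out your plan with the $k$-dependence dropped would assemble wrong coefficients; the fix is mechanical --- apply the same uniform-simplex/Lemma~\ref{ol1} machinery to the $Y_j$'s as to the $X_i$'s --- but it must be done.
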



		\begin{proof}
			
State (0,0): During the previous service period and the current service period of sensor
B, no update from sensor A arrives at the monitor. Figure \ref{zhuangtai} \subref{new00} note that the monitor only received an update from sensor B during the current service
period. As shown in Figure \ref{zhuangtai} \subref{new00}.
Therefore, the PAoI is
\begin{equation}
	\mathbb{E}[A|0,0]=T+T=2T.
\end{equation}
Moreover, the area covered by the AoI waveform at the monitor, i.e., the shaded part in 
Figure \ref{zhuangtai} \subref{new00} is 
\begin{equation}
	\mathbb{E}[Q|0,0]=\frac{(T+1+T+T)T}{2}=\frac{3T^2+T}{2}.
\end{equation}
State (0,1): Sensor A did not sent an update in the previous service period but transmits
one update in the current period, as shown in 
Figure \ref{zhuangtai} \subref{new01}.
Since there was no update during the last service period, the updated AoI of sensor A must not be any better than the AoI of sensor B.
Therefore, such an update from sensor A  does not refresh the AoI on the monitor.
Consequently, the sum of the average PAoI in this case is
\begin{equation}
	\mathbb{E}[A|0,1]=T+T=2T.
\end{equation}
The average area under AoI curve of the Geo-D system in state (0, 1) is given by
\begin{equation}
	\mathbb{E}[Q|0,1]=\frac{(T+1+T+T)T}{2}=\frac{3T^2+T}{2}.
\end{equation}

State (0,n): Sensor A transmits $n\,(n \geq2)$ updates in the current service period since no updates were performed in the previous service period. Similar to state (0,1), the first update sent by sensor A in the current service period is stale and does not make the system update since no update was made in the previous service period. The system totally receives $n-1$ valid updates in the current service period (the update of sensor B, in this case, is also stale).
As shown in  Figure \ref{zhuangtai} \subref{new0n}. Let $X_i\,(i = 1, 2, ..., n)$ denote the service time of the $ith$
update. Within a period $T$, the remaining time after sensor A successfully sent n updates is $X_{n+1}$.
According to the properties of Bernoulli process, the joint probability distribution function of
$(X_1, ..., X_n)$ is
\begin{equation}
	P_{X_1,...,X_n}(x_1,...,x_n|N(T)=n)=\frac{p(1-p)^{x_1-1}...p(1-p)^{x_{n}-1}(1-p)^{T-\sum_{i=1}^nx_i}}{\binom{T}{n}p^n(1-p)^{T-n}},
\end{equation}
Reduced to:
\begin{equation}
	P_{X_1,...,X_n}(x_1...,x_n|N(T)=n)=\frac{1}{\binom{T}{n}}.
\end{equation}
The first PAoI of the monitor is  $A_1=T+X_1+X_2$ and $A_i=X_i+X_{i+1}\,(i\geq2)$. It can be obtained that the expectation of the first PAoI is:
\begin{align}
	\mathbb{E}(A_1) &= \sum_{x_1=1}^{T-n+1}...\sum_{x_i=1}^{T-(n-i)-\sum_{m=1}^{i-1}x_m}...\sum_{x_n=1}^{T-\sum_{i=1}^{n-1}x_i}   P_{X_1,...,X_n}(x_1,...,x_n|N(T)=n)\nonumber\\
	&\qquad\cdot(T+x_1+x_2)  \nonumber \\
	&=\frac{1}{\tbinom{T}{n}}\sum_{x_1=1}^{T-n+1}...\sum_{x_i=1}^{T-(n-i)-\sum_{m=1}^{i-1}x_m}...\sum_{x_n=1}^{T-\sum_{i=1}^{n-1}x_i}(T+x_1+x_2) .
	\label{ea1}
\end{align}
To compute the above expressions, we introduce the following lemma.

Using Lemma 1, we can rewrite \eqref{ea1} as
\begin{equation}
	\mathbb{E}(A_1)=\frac{2(T+1)}{n+1}+T.
\end{equation}
For the sake of brevity, we define the upper limit of the i-th summation of $X_i$ as:
\begin{align}
	b_1=T-n+1,\,b_i=T-(n-i)-\sum_{m=1}^{i-1}x_m,\, i=2,3....n.
\end{align}
The mean of the second PAoI $A_2$ can be expressed as
\begin{align}
	\mathbb{E}(A_2) &= \sum_{x_1=1}^{b_1}\sum_{x_2=1}^{b_2}...\sum_{x_i=1}^{b_i}...\sum_{x_n=1}^{b_n}   P_{X_1,...,X_n}(x_1,...,x_n|N(T)=n)\cdot(x_2+x_3)\nonumber\\
	&=\frac{1}{\binom{T}{n}}\sum_{x_1=1}^{b_1}...\sum_{x_i=1}^{b_i}...\sum_{x_n=1}^{b_n}(x_2+x_3)=\frac{2(T+1)}{n+1} .
\end{align}
Similarly, we have
\begin{equation}
	\mathbb{E}(A_2)=\mathbb{E}(A_3)=...=\mathbb{E}(A_{n-1}) =\frac{2(T+1)}{n+1} .
\end{equation}
Therefore, the sum of the average PAoI for the Geo-D system in state (0, n) is
\begin{equation}
	\mathbb{E}[A|0,n]=\sum_{i=1}^{n-1}\mathbb{E}(A_i)=T+\frac{2(n-1)(T+1)}{n+1}.
\end{equation}
The average of the shaded trapezoid areas $Q_1$, $Q_2$, and $Q_3$ in  Figure \ref{zhuangtai} \subref{new0n}  are given respectively by the following
\begin{align}
	\mathbb{E}(Q_1)=&\sum_{x_1=1}^{b_1}\sum_{x_2=1}^{b_2}...\sum_{x_n=1}^{b_n}   P_{X_1,...,X_n}(x_1...,x_n|N(T)=n)\frac{(2T+1+x_1)x_1}{2}, \\
	\mathbb{E}(Q_2)=&\sum_{x_1=1}^{b_1}\sum_{x_2=1}^{b_2}...\sum_{x_n=1}^{b_n}P_{X_1,...,X_n}(x_1...,x_n|N(T)=n)\frac{(2T+2x_1+1+x_2)x_2}{2},\\
	\mathbb{E}(Q_i)=& \sum_{x_1=1}^{b_1}\sum_{x_2=1}^{b_2}...\sum_{x_n=1}^{b_n}   P_{X_1,...,X_n}(x_1...,x_n|N(T)=n)\frac{(2x_{i-1}+1+x_i)x_i}{2}, \, (3 \leq i \leq n)\\
	\mathbb{E}(Q_{n+1})=&\sum_{x_1=1}^{b_1}\sum_{x_2=1}^{b_2}...\sum_{x_n=1}^{b_n}   P_{X_1,...,X_n}(x_1...,x_n|N(T)=n)\frac{(2x_n+1+x_{n+1})x_{n+1}}{2},
\end{align}where, $x_{n+1}=T-\sum_{i=1}^nx_i,\,b_1=T-n+1,\,b_i=T-(n-i)-\sum_{m=1}^{i-1}x_m, \, i=2,3....n.$\\
\begin{align}
	\mathbb{E}(Q_1)&= \frac{(T+1)[(n+3)T+2]}{(n+1)(n+2)}, \\
	\mathbb{E}(Q_2)&= \frac{(T+1)[(n+4)T+4]}{(n+1)(n+2)}, \\
	\mathbb{E}(Q_3)&=\mathbb{E}(Q_4)=\mathbb{E}(Q_5)=...=\mathbb{E}(Q_{n})=\frac{(T+1)(T+2)}{(n+1)(n+2)} ,  \\
	\mathbb{E}(Q_{n+1})&= \frac{2T(2T+3)-n^2-3n(T+2)}{2(n+1)(n+2)}.
\end{align}
Therefore, the mean area covered by the AoI waveform of the Geo-D system in state $(0, n)$ is
\begin{equation}
	\label{Q0N}
	\mathbb{E}[Q|0,n]=\sum_{i=1}^{n+1}\mathbb{E}[Q_i]=\frac{-n^2+n(8T^2+13T+2)+10T^2+8T-4}{2(n+1)(n+2)}.
\end{equation}

State(k,0): Sensor A sent $ k \,(k \geq 1) $ updates in the previous service period, yet it did not send new updates in the current period. In this case, the instantaneous AoI curve for the Geo-D system is shown in Figure \ref{zhuangtai} \subref{newk0}. When we analyse the state of current,
let $X_i\,(i = 1, 2, ..., n)$ denote the service time of the $i$-th
update in the current period. Within a period $T$, the remaining time after sensor A successfully sent n updates is $X_{n+1}$. As the same way, let $Y_i$ $(i = 1, 2, ..., n)$ denote the service time of the $i$-th
update in the previous period. Within a period $T$, the remaining time after sensor A successfully sent n updates is $Y_{n+1}$.
As shown in Figure \ref{zhuangtai} \subref{newk0}.\par
The mean of PAoI can be calculated as:
\begin{align}
	\label{AK0}
	\mathbb{E}(A|k,0)&= \sum_{y_1=1}^{T-k+1}\sum_{y_2=1}^{T-k+2-y_1}...\sum_{y_k=1}^{T-\sum_{i=1}^{k-1}y_i}   P_{Y_1,...,Y_n}(y_1...,y_n|\widetilde{N}(T)=k)\cdot(T+y_k+y_{k+1})  \nonumber \\
	&=\frac{k(T-1)+3T+1}{k+1} .
\end{align}\par
The average of the area covered by the AoI waveform at the monitor is:
\begin{align}
	\label{QK0}
	\mathbb{E}(Q|k,0)&= \sum_{y_1=1}^{T-k+1}...\sum_{y_k=1}^{T-\sum_{i=1}^{k-1}y_i}   P_{Y_1,...,Y_n}(y_1...,y_n|\widetilde{N}(T)=k)\cdot\frac{T(2y_k+2y_{k+1}+1+T)}{2} \nonumber \\
	&=\frac{T[(k+5)T+4]}{2(k+1)}.
\end{align}\par
For the sake of brevity, we define the upper limit of the i-th summation of $y_i$ as:
\begin{align}
	c_1=T-k+1,c_i=T-(k-i)-\sum_{m=1}^{i-1}y_m, i=2,3....k.
\end{align}

State(k,1): 
Sensor A sent $k\,(k \geq 1)$ updates in the previous service period, and transmitted
one update in the current period, as shown in Figure \ref{zhuangtai} \subref{newk1}. 
Due to the fact that PAOI is calculated in rounds, in this state, the number of updates to PAOI may vary due to simultaneous updates from sensor A and sensor B.

One scenario is that sensors A and B are not updated simultaneously, i.e., $X_1 \neq T$. We use foot tags sub1(2) to distinguish between these two situations. In the current service period, the monitor
received two valid updates, one from sensor A and the other from sensor B. As shown in Figure \ref{zhuangtai} \subref{newk0}.\par
The expressions for each PAoI can be obtained as follows:
\begin{align}
	A_{1,sub1}&=Y_k+Y_{k+1}+X_1,\\
	A_{2,sub1}&=Y_{k+1}+T.
\end{align}\par
Then the expectation can be calculated as:
\begin{align}
	\mathbb{E}(A_{1,sub1})&=\sum_{x_1=1}^{T-1}\sum_{y_1=1}^{c_1}...\sum_{y_k=1}^{c_k}P_{Y_1,...,Y_n}(y_1...,y_n|\widetilde{N}(T)=k)P_{X_1}(x_1|N(T)=1) \nonumber\\
	&\qquad\cdot(y_k+y_{k+1}+x_1) \nonumber \\
	&=\frac{kT-k+5T-3}{2k+2}, \\
	\mathbb{E}(A_{2,sub1})&=\sum_{x_1=1}^{T-1}\sum_{y_1=1}^{c_1}...\sum_{y_k=1}^{c_k}P_{Y_1,...,Y_n}(y_1...,y_n|\widetilde{N}(T)=k)P_{X_1}(x_1|N(T)=1)\cdot(y_{k+1}+T) \nonumber \\
	&=\frac{T-1}{2}+\frac{T-1}{T}(T-\frac{k(T+1)}{T(k+1)}).
\end{align}

Another scenario is that sensors A and B are updated simultaneously, $X_1 = T$. The sensor only received one update from A.
Then the expectation can be calculated as:
\begin{align}
	\mathbb{E}(A_{1,sub2})&=\sum_{x_1=T}^{T}\sum_{y_1=1}^{c_1}...\sum_{y_k=1}^{c_k}P_{Y_1,...,Y_n}(y_1...,y_n|\widetilde{N}(T)=k)P_{X_1}(x_1|N(T)=1) \nonumber\\
	&\qquad\cdot(y_k+y_{k+1}+x_1) \nonumber \\
	&=2-\frac{k(T+1)}{T(k+1)}.
\end{align}

Hence, we can obtain the sum of the average PAoI in this cases as:
\begin{equation}
	\mathbb{E}[A|k,1]=\frac{T-1}{T}\sum_{i=1}^{2}\mathbb{E}(A_{i,sub1})+\frac{1}{T}\mathbb{E}(A_{1,sub2})+=\frac{3T-1[k(T-1)+3T+1]}{2T(k+1)}.
\end{equation}\par
The respective expressions for the waveform coverage area $Q_1$ and $Q_2$  in Figure \ref{zhuangtai} \subref{newk1} can be computed as:
\begin{align}
	\label{QK1}
	\mathbb{E}(Q_1)&= \sum_{x_1=1}^{T}\sum_{y_1=1}^{c_1}...\sum_{y_k=1}^{c_k}   P_{Y_1,...,Y_n}(y_1...,y_n|\widetilde{N}(T)=k)P_{X_1}(x_1|N(T)=1) \nonumber   \\ 
	&\qquad\cdot  [(y_k+y_{k+1}+\frac{x_1}{2})x_1]         \nonumber \\
	&=\frac{(T+1)(2kT+k+14T+13)}{12(k+1)},
\end{align}
\begin{align}
	\mathbb{E}(Q_2)&= \sum_{x_1=1}^{T}\sum_{y_1=1}^{c_1}..\sum_{y_k=1}^{c_k}   P_{Y_1,...,Y_n}(y_1..,y_n|\widetilde{N}(T)=k)P_{X_1}(x_1|N(T)=1) \nonumber   \\ 
	&\qquad\cdot[\frac{(2y_{k+1}+x_1+1+T)(T-x_1)}{2}]        \nonumber \\
	&=\frac{3k^2T-kT^2+k+2T^2-3T-2}{6(k+1)}.
\end{align}\par
Therefore, the average area under the instantaneous AoI curve of the Geo-D system in state (k, 1) is given by
\begin{equation}
	\label{AK1}
	\mathbb{E}[Q|k,1]=\sum_{i=1}^{2}\mathbb{E}(Q_i)=\frac{2k^2T+kT+k+6T^2+7T+3}{4k+4}.
\end{equation}

State(k,n):	Sensor A transmitted $k \,(k \geq 1)$ updates in the previous service period, and
transmits $n \,(n \,\geq 2)$ updates in the current period. As shown in Figure \ref{zhuangtai} \subref{newkn}.\par
The expressions of each PAoI are as follows:
\begin{align}
	A_1&=Y_k+Y_{k+1}+X_1,\\     
	A_2&=Y_{k+1}+X_1+X_2,\\
	A_i&=X_{i-1}+X_i\,(3\leq i \leq n)    .
\end{align}\par
Referring to the same calculation procedure above, it is obtained that:
\begin{align}
	\mathbb{E}(A_1)=& \sum_{x_1=1}^{b_1}..\sum_{x_k=1}^{b_n}\sum_{y_1=1}^{c_1}...\sum_{y_k=1}^{c_k}(y_k+y_{k+1}+x_1)\cdot \nonumber   \\ 
	&\qquad\cdot P_{X_1,..,X_n}(x_1,..,x_n|N(T)=n)\cdot P_{Y_1,..,Y_k}(y_1..,y_k|\widetilde{N}(T)=k)        \nonumber \\
	&=T+\frac{T+1}{n+1}-\frac{(k-1)(T+1)}{k+1}.
\end{align}
Similarly, we have
\begin{align}
	\mathbb{E}(A_2)&=-\frac{k(T+1)}{k+1}+\frac{2(T+1)}{n+1}+T,\\
	\mathbb{E}(A_3)&=\mathbb{E}(A_4)=...=\mathbb{E}(A_{n}) =\frac{2(T+1)}{n+1} .
\end{align}
Therefore, the sum of the average PAoI for the Geo-D system in state (k, n) is:
\begin{equation}
	\mathbb{E}[A|k,n]=\sum_{i=1}^{n}\mathbb{E}(A_i)=\frac{k[(2n-1)T-3]+n(5T+3)+2T}{(k+1)(n+1)}.
\end{equation}
The respective expressions for the waveform coverage area Q are:
\begin{align}
	\label{AKN}
	Q_1&=\frac{(Y_k+Y_{k+1}+1+Y_k+Y_{k+1}+X_1)X_1}{2},\\     
	Q_2&=\frac{(Y_{k+1}+X_1+1+Y_{k+1}+X_1+X_2)X_2}{2},\\
	Q_i&=\frac{(X_{i-1}+1+X_{i-1}+X_{i})X_i}{2},(3\leq i \leq n),\\
	Q_{n+1}&=\frac{(X_n+1+X_n+X_{n+1})X_{n+1}}{2}.
\end{align}
We can get the following expectations:
\begin{align}
	\mathbb{E}[Q_1]&=\frac{(T+1){n(-k+2T+1)+(k+5)T+4}}{(k+1)(n+1)(n+2)},\\
	\mathbb{E}[Q_2]&=\frac{(T+1){k(-n+2T+2)+(n+4)T+4}}{(k+1)(n+1)(n+2)},\\
	\mathbb{E}(Q_3)&=\mathbb{E}(Q_4)=\mathbb{E}(Q_5)=...=\mathbb{E}(Q_{n})=\frac{(T+1)(T+2)}{(n+1)(n+2)},  \\
	\mathbb{E}(Q_{n+1})&= \frac{2T(2T+3)-n^2-3n(T+2)}{2(n+1)(n+2)}.
\end{align}
Hence, we have the average area covered by AoI waveform of the Geo-D system in state (k, n) given as:
\begin{multline}       
	\text{$\mathbb{E}[Q|k,n]=\sum_{i=1}^{n+1}\mathbb{E}(Q_i)=-\frac{1}{2(k+1)(n+1)(n+2)}\cdot$} \\
	\text{$ \left\{k[n^2+n(-4^2-5T+2)-2T^2+8T+12]+\right.                        $    }\\
	\text{$ \left. n^2-n[10T^2+17T+4-2T(7T+8)]           \right\}                           $                         }
\end{multline}

			\begin{table*}[htp]
				\setlength{\tabcolsep}{4mm}
				\renewcommand\arraystretch{1.6}
				\centering
				\caption{The expectations  $\mathbb{E}[A|(k,n)]$, $\mathbb{E}[V|(k,n)]$, and  $\mathbb{E}[Q|(k,n)]$ for different $k$ and $n$.  }
				\begin{tabular}{|c|c|c|c|}
					\hline
					State	& $k=0,\,n=0$ & $k=0 ,\,n=1$&$k=0,\,2\leq n \leq T$\\ 
					\hline
					\rowcolor[gray]{.9}
					$\mathbb{E}[A|(k,n)]$ &$2T$&$2T$   &$T+\frac{2(n-1)(T+1)}{n+1} $\\
					\hline
					$\mathbb{E}[V(k,n)]$ &$1$&$1$   &$n-1 $\\
					\hline
					\rowcolor[gray]{.9}
					$\mathbb{E}[Q|(k,n)]$ &$T(\frac{3T+1}{2})$&$T(\frac{3T+1}{2})$  &$\frac{(T+1)[-2+5T+n(2+4T)]}{(n+1)(n+2)}$  \\
					\hline
					\hline
					State	& $ 1 \leq k \leq T,\,n=0$ & $ 1 \leq  k \leq T,\,n=1$&$1\leq k \leq T,\,2\leq n \leq T$\\
					\hline
					\rowcolor[gray]{.9}
					$\mathbb{E}[A|(k,n)]$& $\frac{k(T-1)+3T+1}{k+1}$ &  $\frac{(3T-1)[3T+1+k(T-1)]}{2T(k+1)}$&$\frac{k[(2n-1)T-3]+n(5T+3)+2T}{(k+1)(n+1)}$\\
					\hline
					$\mathbb{E}[V(k,n)]$ &$1$&$2-\frac{1}{T} $   &$n $\\
					\hline
					\rowcolor[gray]{.9}
					$\mathbb{E}[Q|(k,n)]$ &$\frac{T[3+5T+k(T-1)]}{2(k+1)}$&$\frac{(4+k)T^2-(k-3)T+1}{2(k+1)}$ &$\frac{(T+1)[k(2nT+T-6)+n(5T+3)+7T]}{(k+1)(n+1)(n+2)} $  \\
					\hline
				\end{tabular}	
				\label{tablepaoi} 
			\end{table*}
		\end{proof} 
			\subsection{Average PAoI of the Geo-D System}
		In a service period, the average PAoI of the Geo-D system can be obtained by dividing the sum
		of PAoI by the number of PAoI. The average number of PAoI of the Geo-D system in
		a period can be calculated by
		\begin{equation}
			\mathbb{E}[V]=        \sum_{k=0}^{T}\sum_{n=0}^{T} \Pr\left\lbrace \widetilde{N}(T)=k   \right \rbrace \cdot \Pr\left\lbrace {N}(T)=n   \right \rbrace\cdot  \mathbb{E}[V|k,n].
		\end{equation}\par
		The sum of the average PAoI for the Geo-D system is calculated as
		\begin{equation}
			\label{ea}
			\mathbb{E}[A]=        \sum_{k=0}^{T}\sum_{n=0}^{T} \Pr\left\lbrace \widetilde{N}(T)=k   \right \rbrace \cdot \Pr\left\lbrace {N}(T)=n   \right \rbrace\cdot  \mathbb{E}[A|k,n].
		\end{equation}
			According to \ref{state}, we can obtain Table \ref{tablepaoi}, and substituting the values into \eqref{en} nad \eqref{ea}.   
		\begin{align}
			\mathbb{E}[V]&=q^{2T\!-\!1}\!+\!q^{T\!-\!1}(T\!-\!1)p\!+\!Tp \\
			\mathbb{E}[A]&=\bigg[2T\!+\!q^{2T\!-\!1}\Big[\frac{2}{p}\!-\!(\frac{p}{2}\nonumber 
			\!-\!2)(T\!-\!1)\Big]\!+\!q^{T\!-\!1}\Big(2\!-\!\frac{2}{p}\!-\!\frac{p}{2}\!-\!\frac{Tp}{2}\!+\!T^2p\Big)   \bigg]
		\end{align}\par
		Therefore, the average PAoI of the Geo-D system can be calculated by
		\begin{equation}
			{\rm\Delta_{Geo-D}^{peak}}=\frac{\mathbb{E}[A]}{\mathbb{E}[V]}=\frac{1}{q^{2T\!-\!1}\!+\!q^{T\!-\!1}(T\!-\!1)p\!+\!Tp}\bigg[2T\!+\!q^{2T\!-\!1}\Big[\frac{2}{p}\!-\!(\frac{p}{2}\nonumber 
			\!-\!2)(T\!-\!1)\Big]\!+\!q^{T\!-\!1}\Big(2\!-\!\frac{2}{p}\!-\!\frac{p}{2}\!-\!\frac{Tp}{2}\!+\!T^2p\Big)   \bigg], \\
		\end{equation}

			\subsection{Average AoI of the Geo-D System}
		The average AoI of Geo-D system can be obtained by multiplying the probability of the system in each state with the average AoI in that state.
		Where, the average AoI under each state is the waveform coverage area Q divided by the service period time $T$.
		Then, the average AoI of Geo-D system is obtained by the following formula:
		\begin{align}       
				{ \overline{\rm \Delta}_\text{Geo-D}}=\sum_{k=0}^{T}\sum_{n=0}^{T} \Pr\left\lbrace \widetilde{N}(T)=k   \right \rbrace \cdot \Pr\left\lbrace {N}(T)=n   \right \rbrace\cdot  \frac{\mathbb{E}[Q|k,n]}{T}.
			\label{EQ}
		\end{align}
		According to \ref{state}, we can obtain Table \ref{tablepaoi}, and substituting the values into \eqref{EQ}.

		Therefore, the analytic expression for the average AoI of the Geo-D system is as follows:
		\begin{align}
				{ \overline{\rm \Delta}_\text{Geo-D}}=&\frac{2}{p}\!+\!\frac{q^T}{p}\Big(\!-\!1\!+\!\frac{2}{T}\!-\!\frac{3}{pT}\Big)\!+\!\frac{q^{2T}}{p}\Big(2\!-\!\frac{2}{T}\!+\!\frac{3}{pT}\Big).
		\end{align}

		\subsection{Discussions}\label{Discussions}
		One can verify that the result presented in  Theorem \ref{Pro1}  also applies to the boundary cases, i.e., $p=1, T=1, p\to0,$ and $T\to \infty$. Recall that when $p\to 0$ and $T\to \infty$, the Geo-D system degenerates into a ZW/D/1 queue and  a ZW/Geo/1 queue, respectively. Hence, Theorem \ref{Pro1} provides the basis for evaluating the improvement of information freshness brought from the redundancy of each sensor.
		Let us define the reduction ratio of average AoI and PAoI for the Geo-D queue system regarding that of a specific queue $\psi$ with the same service rate as $\eta_{\psi}$ and $\eta_{\psi}^{\text{p}}$, respectively, i.e.,  
		\begin{align}\label{etapsi}
			\eta_{\psi}:=&(\overline\Delta_{\psi}-\overline\Delta_{\text {Geo-D}})/\overline\Delta_{\psi},   \\ 
			\eta_{\psi}^{\text{p}}:=&(\overline\Delta_{\psi}^{\text{p}}-\overline\Delta_{\text {Geo-D}}^{\text{p}})/\overline\Delta_{\psi}^{\text{p}}.\label{etapsip}
		\end{align}
		For fairness, let us consider that $\mu_{\text A}=\mu_{\text {B}}=\mu$. Then, it has $p=\mu, T=\frac{1}{\mu}$, and
		\begin{align}
			{\overline{\Delta}^{\text p}_{\text{Geo-D}}}=& \frac{1}{\mu}(1-\mu)^{\frac{1}{\mu}-1}\Big[(4-\mu)(1-\mu)^{\frac{1}{\mu}+1}- \nonumber \\
			&2(1-\mu)^{\frac{1}{\mu}}(\mu-1)^{1-\frac{2}{\mu}}  +3\mu-\mu^2-1        \Big],  \\
			{\overline\Delta_{\text{Geo-D}}}=&\frac{1}{\mu}\Big[2\!+\!(5\!-\!2\mu)(1\!-\!\mu)^{\frac{2}{\mu}}\!+\! 2(1\!-\!\mu)^{\frac{1}{\mu}}(\mu\!-\!2) \Big].
		\end{align}
		When $\mu_{\text{A}}=p$ and $\mu_{\text{B}}=0$, it implies that $p=\mu, T\!\to \!\infty$, and 
		\begin{align}
			{\overline{\Delta}^{\text p}_{\text{ZW/Geo/1}}}=& \frac{2}{\mu},	\label{PAoIGeo}  \\
			{\overline\Delta_{\text{ZW/Geo/1}}}=&\frac{2}{\mu}\label{AoIGeo}.
		\end{align}
		When $\mu_{\text{A}}=0$ and $\mu_{\text{B}}=\mu$, it implies that $p=0, T=\frac{1}{\mu}$, and 
		\begin{align}
			{\overline{\Delta}^{\text p}_{\text{ZW/D/1}}}=&\frac{2}{\mu}, \label{PAoID}  \\
			{\overline\Delta_{\text{ZW/D/1}}}=&\frac{3}{2\mu}+\frac{1}{2}. \label{AoID}
		\end{align}
		Based on \eqref{etapsi}-\eqref{AoID}, one can know that when $\mu \in (0,1)$
		\begin{align}  
			\eta_{\text{ZW/Geo/1}}=&\big[{\overline\Delta_{\text{ZW/Geo/1}}}-\overline{\Delta}_{\text{Geo-D}}\big]/{\overline\Delta_{\text{ZW/Geo/1}}} \nonumber \\
			=&(2-\mu)(1-\mu)^\frac{1}{\mu}+(\mu-\frac{5}{2})(1-\mu)^\frac{2 }{\mu},   \label{reductionGeo}  \\
			\eta_{\text{ZW/D/1}}=&\big[{\overline\Delta_{\text{ZW/D/1}}}-\overline{\Delta}_{\text{Geo-D}}\big]/{\overline\Delta_{\text{ZW/D/1}}} \nonumber \\
			=&\frac{\!\mu\!-\!1\!+\!4(\!2\!-\!\mu\!)(\!1\!-\!\mu\!)^\frac{1}{\mu}\!+\!2(2\mu\!-\!5)(1\!-\!\mu)^\frac{2}{\mu} }{\mu\!+\!3}  \label{reductionD} .
		\end{align}
		In particular, when $\mu \to 0$, we can derive that
		\begin{align}  
			&\lim\limits_{\mu\to0} \eta_{\text{ZW/Geo/1}}=\frac{2}{\mathrm{e}}-\frac{5}{2\mathrm{e}^2}\approx39.74\%,\label{0Geo}   \\
			&\lim\limits_{\mu\to0} \eta_{\text{ZW/D/1}}	=\frac{8}{3\mathrm{e}}-\frac{10}{3\mathrm{e}^2}-\frac{1}{3}\approx19.65\%.\label{0D} 
		\end{align}
		Note that when $\mu \to 0$, the average AoI of a single-queue system would increase dramatically. In \eqref{0Geo} (resp. \eqref{0D}), it indicates that a significant 39.74$\%$ (resp.19.65\%) reduction of the average AoI can be achieved from adding a ZW/D/1 queue (resp. ZW/Geo/1 queue ) for a ZW/Geo/1 queue (resp. ZW/D/1 queue). 
		
		Similarly, one can also know that 
		\begin{align}  
			\eta^{\text{p}}_{\text{ZW/Geo/1}}=& \,\eta^{\text{p}}_{\text{ZW/D/1}} \nonumber \\
			=&\frac{{(\mu\!-\!2)(1\!-\!\mu)^\frac{2}{\mu}\!+\!(1\!-\!\mu)^{\frac{1}{\mu}-1}}(3\mu\!-\!1\!-\!\mu^2)}{{2(1\!-\!\mu)^\frac{2}{\mu}\!+\!(1\!-\!\mu)^{\frac{1}{\mu}\!-\!1}\!+\!1}},
		\end{align} where $\mu \in(0,1)$. In particular, 
		\begin{align}  
			\lim\limits_{\mu \to 0}\eta^{\text{p}}_{\text{ZW/Geo/1}}=\lim\limits_{\mu \to 0}\eta^{\text{p}}_{\text{ZW/D/1}}\!=\!\frac{3\mathrm{e}\!-\!2}{2(\mathrm{e}^2\!+\!\mathrm{e}\!+\!1)}\!\approx\!27.71\%,
		\end{align}
		which shows that the reduction of average PAoI can be significantly up to    $27.71\%$ by adding  a ZW/Geo/1 queue to a ZW/D/1 queue and vise versa.
		
		\section{Connection Between Discrete-time and Continuous-time Systems}\label{VI}
		Note that the geometric distribution and Bernoulli process are widely regarded as discrete counterparts  of exponential distribution and Poisson process, respectively. In this section, we show that the continuous-time system is just a limit case of the  discrete-time system in terms of AoI, and the result of the discrete-time system is more general than that of the continuous-time system. For clarity, let us specify $\overline\Delta_{\psi}$ and $\overline\Delta_{\psi}^{\text{p}}$ of a queue as functions of $\mu_\text{A}$ and $\mu_\text{B}$, i.e., $\overline\Delta_{\psi}(\mu_{\text{A}},\mu_{\text{B}})$ and $\overline\Delta_{\psi}^{\text{p}}(\mu_{\text{A}},\mu_{\text{B}})$. 

		
		By replacing the ZW/Geo/1 queue with a ZW/M/1 queue and relaxing the integer constraint of $T$ to a positive real number, the considered system changes to a continuous-time dual-queue system called M-D system whose average AoI and PAoI were studied in \cite{mmmd}. An M-M system where both queues are ZW/M/1 queues was also investigated in \cite{mmmd}. Note that the discrete-time counterpart of the M-M system can be obtained from the Geo-D system by replacing the ZW/D/1 queue with another ZW/Geo/1 queue.  
		
		\subsection{Connection between the Geo-D System and the M-D System}
		To study the connection between the considered system and its discrete counterpart, let us assume the time slot length is $1/\delta$ second, where $\delta$ represents the degree to which one second is divided.
		Then, we have the following result.
		
			\begin{Theorem}			
			\label{T3}Consider an M-D system with service rate of  $\mu_{\text{A}}=\lambda$ and $\mu_{\text{B}}=1/T_{\text M}$. It holds that
				\begin{align}\label{AoIMD}
					\lim\limits_{\delta \to \infty}\frac{1}{\delta}{\overline{\Delta}_{\text {Geo-D}}}\Big(\frac{\lambda}{\delta},\frac{1}{\delta T_{\text{M}}}\Big) &=\overline{\Delta}_{\text{M-D}}\Big(\lambda,\frac{1}{ T_{\text{M}}}\Big), \\
					\label{pAoIMD}	\lim\limits_{\delta \to \infty}\frac{1}{\delta}{\overline{\Delta}^{\text{p}}_{\text {Geo-D}}}\Big(\frac{\lambda}{\delta},\frac{1}{\delta T_{\text{M}}}\Big) &=\overline{\Delta}^{\text{p}}_{\text{M-D}}\Big(\lambda,\frac{1}{ T_{\text{M}}}\Big).
				\end{align}
				\end{Theorem}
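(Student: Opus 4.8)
The plan is to evaluate the limit directly from the closed-form expressions \eqref{exaoi} and \eqref{expaoi} established in Theorem \ref{Pro1}. First I would substitute $p=\lambda/\delta$ and $T=\delta T_{\text M}$ into those expressions; since the Geo-D model requires $T\in\mathbb{N}^+$, the limit $\delta\to\infty$ is taken over those $\delta$ with $\delta T_{\text M}\in\mathbb{N}^+$ (always possible, e.g. $\delta=m/T_{\text M}$, $m\to\infty$), so that $T=m\to\infty$ and $p=\lambda T_{\text M}/m\to 0$. The only analytic input needed is the classical discrete-to-continuous limit $q^{T}=(1-\lambda/\delta)^{\delta T_{\text M}}\to e^{-\lambda T_{\text M}}$; because $q=1-\lambda/\delta\to 1$, this also gives $q^{T-1}\to e^{-\lambda T_{\text M}}$, $q^{2T-1}\to e^{-2\lambda T_{\text M}}$, and $q^{2T}\to e^{-2\lambda T_{\text M}}$.

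Next I would track the order in $\delta$ of each summand, using that $\delta p=\lambda$ and $pT=\lambda T_{\text M}$ are fixed while $1/T=O(1/\delta)$. For the average AoI, multiplying \eqref{exaoi} by $1/\delta$ turns $\tfrac{2}{p}$ into $\tfrac{2}{\lambda}$ and $\tfrac{3}{pT}$ into $\tfrac{3}{\lambda T_{\text M}}$, while the $\tfrac{2}{T}$ terms vanish; as there are only finitely many summands, each a convergent sequence times a bounded one, the limit passes inside the sum and produces a closed form in $e^{-\lambda T_{\text M}}$ and $e^{-2\lambda T_{\text M}}$. For the average PAoI I would use the decomposition $\overline{\Delta}^{\text{p}}_{\text{Geo-D}}=\mathbb{E}[A]/\mathbb{E}[V]$ with $\mathbb{E}[V]=q^{2T-1}+q^{T-1}(T-1)p+Tp$: since $(T-1)p\to\lambda T_{\text M}$ and $Tp=\lambda T_{\text M}$, the denominator converges to the finite, nonzero limit $e^{-2\lambda T_{\text M}}+\lambda T_{\text M}e^{-\lambda T_{\text M}}+\lambda T_{\text M}$, while $\tfrac1\delta\mathbb{E}[A]$ converges by the same bookkeeping --- noting in particular that $\tfrac1\delta T^2p\to\lambda T_{\text M}^2$ survives whereas $\tfrac1\delta\cdot\tfrac{Tp}{2}$, $\tfrac1\delta\cdot\tfrac{p}{2}$, and $\tfrac1\delta\cdot 2$ all vanish, and that $\tfrac1\delta\cdot2T\to 2T_{\text M}$ and $\tfrac1\delta(\tfrac{p}{2}-2)(T-1)\to-2T_{\text M}$. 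Dividing the two limits yields the claimed limit of $\tfrac1\delta\overline{\Delta}^{\text{p}}_{\text{Geo-D}}$.

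Finally I would compare the two resulting closed forms with the expressions for $\overline{\Delta}_{\text{M-D}}(\lambda,1/T_{\text M})$ and $\overline{\Delta}^{\text{p}}_{\text{M-D}}(\lambda,1/T_{\text M})$ reported in \cite{mmmd} and reconcile them by elementary algebra. I expect this matching step to be the main obstacle: the reference's formulas are likely written in a different but equivalent grouping, so some rearrangement --- collecting the $e^{-\lambda T_{\text M}}$, $e^{-2\lambda T_{\text M}}$, and constant pieces, and clearing the common denominator in the PAoI case --- will be needed to exhibit equality. Everything preceding that comparison is routine termwise limit evaluation, and it is reassuring that Theorem \ref{Pro1} was already checked to behave correctly in the boundary regimes $p\to 0$ and $T\to\infty$.
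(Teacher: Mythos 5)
Your proposal is correct and follows essentially the same route as the paper: substitute $p=\lambda/\delta$, $T=\delta T_{\text M}$ into the closed forms of Theorem \ref{Pro1}, evaluate the limit term by term using $(1-\lambda/\delta)^{\delta T_{\text M}}\to \mathrm{e}^{-\lambda T_{\text M}}$, and match the resulting expressions against Eqs.~(3)--(4) of \cite{mmmd}. Your added care about restricting $\delta$ so that $\delta T_{\text M}\in\mathbb{N}^+$ and about the nonvanishing of the limit of $\mathbb{E}[V]$ are minor refinements the paper leaves implicit.
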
	
			\begin{proof}
				Based on \eqref{exaoi}, it has
				\begin{align}
					&\lim\limits_{\delta \to \infty}\frac{1}{\delta}{\overline{\Delta}_{\text {Geo-D}}}\Big(\frac{\lambda}{\delta},\frac{1}{\delta T_{\text{M}}}\Big) \nonumber \\
					=&\lim\limits_{\delta \to \infty}\frac{1}{\delta}\bigg[\frac{2}{\frac{\lambda}{\delta}}+\frac{(1-\frac{\lambda}{\delta})^{\delta  T_{\text{M}}}}{\frac{\lambda}{\delta}}  \big(-1+\frac{2}{\delta  T_{\text{M}}}-\frac{3}{\frac{\lambda}{\delta}\cdot\delta  T_{\text{M}}}\big) \nonumber \\
					&+	\frac{(1-\frac{\lambda}{\delta})^{2\delta T_{\text{M}}}}{\frac{\lambda}{\delta}} \big(2-\frac{2}{\delta  T_{\text{M}}}+\frac{3}{\frac{\lambda}{\delta}\cdot\delta T_{\text{M}}}\big)\bigg]     \nonumber    \\
					=& \frac{ 3+2T_{\text{M}}\lambda+\mathrm{e}^{T_{\text{M}}\lambda}[-3+(-1+2\mathrm{e}^{T_{\text{M}}\lambda})T_{\text{M}}\lambda]}{T_{\text{M}}\lambda^2\mathrm{e}^{2T_{\text{M}}\lambda}}      \nonumber \\
					\overset{\text{(a)}}{=}&{\overline{\Delta}_{\text {M-D}}}\Big({\lambda},\frac{1}{T_{\text{M}}}\Big)
				\end{align}
				where (a) follows from  Eq. (4) of \cite{mmmd}. Similarly, 
				\begin{align}
					&\lim\limits_{\delta \to \infty}\frac{1}{\delta}{\overline{\Delta}^{\text{p}}_{\text {Geo-D}}}\Big(\frac{\lambda}{\delta},\frac{1}{\delta T_{\text{M}}}\Big) \nonumber \\
					=&\lim\limits_{\delta \to \infty}\frac{1}{({1-\frac{\lambda}{\delta}})^{2{\delta T_{\text{M}}}-1}+({1-\frac{\lambda}{\delta}})^{{\delta T_{\text{M}}}-1}({\delta T_{\text{M}}}-1){\frac{\lambda}{\delta}}+{\lambda T_{\text{M}}}}    \nonumber \\
					&\bigg[2{\delta T_{\text{M}}}+ ({1-\frac{\lambda} {\delta}})^{2{\delta T_{\text{M}}}-1}\Big({\frac{2\lambda}{\delta}-(\frac{\delta}{2\lambda}-2)(\delta T_{\text{M}}-1)}\Big)\nonumber \\
					&+({1-\frac{\lambda}{\delta}})^{{\delta T_{\text{M}}}-1}\Big(2-\frac{2}{{\frac{\lambda}{\delta}}}-{\frac{\lambda}{2\delta}}-\frac{\delta T_{\text{M}}\lambda}{2\delta}+{\delta^2 T^2_{\text{M}}}{\frac{\lambda}{\delta}} \Big)                             \bigg]\cdot \frac{1}{\delta}   \nonumber \\
					=&\frac{2+2\lambda T_{\text{M}}+\mathrm{e}^{\lambda T_{\text{M}}}[-2+\lambda T_{\text{M}}(2\mathrm{e}^{\lambda T_{\text{M}}}+\lambda T_{\text{M}})]}{\lambda[1+\mathrm{e}^{\lambda T_{\text{M}}}(1+\mathrm{e}^{\lambda T_{\text{M}}})\lambda T_{\text{M}}]} \nonumber \\
					\overset{\text{(a)}}{=}&{\overline{\Delta}^{\text{p}}_{\text {M-D}}}\Big({\lambda},\frac{1}{T_{\text{M}}}\Big),
				\end{align} 
				where (a) follows from Eq. (3) of \cite{mmmd}.	
			\end{proof}
			
			Note that the factor $1/\delta$ at the left-hand side of \eqref{AoIMD} and that of \eqref{pAoIMD} transform the AoI values in a time slot to second.
			Theorem \ref{T3} reveals that  the average AoI and PAoI of a dual-queue M-D system is just a limit case of the counterpart of a dual-queue Geo-D system. More specifically, the average AoI and PAoI of a dual-queue M-D system can be approached by  a series of dual-queue Geo-D systems with shrinking time slot length $1/\delta$. To understand this, let us rigorously show how a geometric distribution  evolves into an exponential distribution when the time slot length tends to be infinitesimal.
			
			Specifically, for $l \in \mathbb{N^+}$, suppose that $U_l$ follows the geometric distribution on $\mathbb{N^+}$ with success parameter $p_l=\frac{r}{l}$, where $r>0$. Then the distribution of $M_l:=U_l/l$ converges to the exponential distribution with parameter $r$ as $l \to \infty$.
			This is because
			\begin{align}
				&\lim\limits_{l \to \infty}	\mathbb{P}(M_l\leq m)=\lim\limits_{l \to \infty}\mathbb{P}({U_l} \leq lm) \nonumber \\
				=&\lim\limits_{l \to \infty}1-\bigg(1-\frac{r}{l}\bigg)^{l m} \nonumber \\
				=&1-\mathrm{e}^{-rm}.\label{GtoM}
			\end{align}
			Based on  \eqref{GtoM}, it reveals that by appropriately selecting the success parameter, a sequence of geometrically distributed random variables converge to an exponentially distributed random variable. This implies that a sequence of ZW/Geo/1 queues with service rate $\mu_{\text{A}}=\lambda/\delta$ would converge to a ZW/M/1 queue with service rate $\lambda$. Note that $\delta$ not only plays the role of indexing the sequence but also controls the time slot length $1/\delta$. The key to selecting the sequence of ZW/Geo/1 queues is to guarantee that the service rate in a slot is the same as that of the ZW/M/1 queue in the second. One can see that the geometrically-distributed random variables $\{X_i\}$ and $\{Y_j\}$ for different states also converge to the corresponding exponentially-distributed random variables when $\delta\to\infty$. In contrast, for the ZW/D/1 queue, the service period $T$ in a slot of the Geo-D system should be considered as $T_{\text{M}}/(1/\delta)$ since $T_{\text{M}}$ is in second and should be normalized based on the time slot length.	Finally, the convergence presented in Theorem \ref{T3} holds naturally.

			\subsection{Connection between the Geo-Geo System and the M-M System}
			Intuitively, the relation between the Geo-D and the M-D systems in terms of AoI can be extended to that between the Geo-Geo and the M-M systems. 
			
			First, let us derive the average AoI of the Geo-Geo system. 
			\begin{lemma}
				The average AoI of the Geo-Geo system is 
				\begin{multline}\label{eqgeogeo}
					\overline{\Delta}_{\text{Geo-Geo}}(\mu_{\text{A}},\mu_{\text{B}})= \frac{2(\mu_{\text A}^2+\mu_{\text B}^2+3\mu_{\text A}\mu_{\text B}-3\mu_{\text A}^2\mu_{\text B}-3\mu_{\text A}\mu_{\text B}^2+2\mu_{\text A}^2\mu_{\text B}^2)}{(\mu_{\text A}+\mu_{\text B}-\mu_{\text A}\mu_{\text B})^3}.  
				\end{multline} 
			\end{lemma}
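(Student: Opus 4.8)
The plan is to mimic the graphical/state‑based derivation used for the Geo‑D system in Theorem~\ref{Pro1}, but now with both queues having geometrically distributed service times. Because neither queue has a deterministic period, the clean ``period of $T$ slots'' decomposition is no longer available; instead I would condition on the renewal epochs of one of the two queues—say queue B—and treat the inter‑delivery interval of B as a random period whose length is geometric with parameter $\mu_{\text B}$. Within such a random period I would again classify by the number $N$ of updates from queue A delivered in the current B‑period and the number $K$ delivered in the previous B‑period, exactly as in Section~\ref{state}; the difference is that the conditional law of the $X_i$'s and $Y_j$'s given $N=n$, $K=k$, and the B‑period length is now the conditional law of a geometric renewal process, and the period length itself must be averaged out at the end.

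First I would set up the joint distribution: given that queue B's current inter‑delivery time is $T_{\text B}$ slots (geometric with parameter $\mu_{\text B}$) and that queue A delivers $n$ updates in that window, the service times $(X_1,\dots,X_n)$ are uniform on the simplex $\{x_i\ge 1,\ \sum_{i=1}^n x_i\le T_{\text B}\}$, just as in the Geo‑D proof, so Lemma~\ref{ol1} applies verbatim for the inner sums. Next I would reuse the per‑state area expressions $\mathbb{E}[Q|(k,n)]$ already tabulated in Table~\ref{tablepaoi}, since the geometry of the AoI staircase inside a period is identical—what changes is only that $T$ is replaced by the random $T_{\text B}$. Then I would take the expectation over $T_{\text B}\sim\mathrm{Geo}(\mu_{\text B})$, and over $K,N$ which are now (conditionally on $T_{\text B}$) Binomial$(T_{\text B},\mu_{\text A})$; summing the resulting series over $k$, $n$, and $T_{\text B}$ should collapse—via the same generating‑function identities that produce $q^{T}$ and $q^{2T}$ terms in \eqref{exaoi}—into a closed form symmetric in $\mu_{\text A}$ and $\mu_{\text B}$. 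Finally I would divide the total expected area by the expected period length $1/\mu_{\text B}$ (or equivalently by the expected number of renance slots), and simplify to match \eqref{eqgeogeo}; the symmetry of the claimed answer in $\mu_{\text A}\leftrightarrow\mu_{\text B}$ is a strong consistency check, since conditioning on queue A instead must give the same value.

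An alternative, possibly cleaner route is to bypass the B‑period conditioning entirely and instead work directly with the stationary AoI process, using the fact that in a ZW/Geo/1 queue each delivered update is ``fresh'' unless it is dominated by the other queue's more recent delivery. One can then compute $\overline{\Delta}_{\text{Geo-Geo}}$ as a sum over the two queues of $\mathbb{E}[(\text{area of the staircase between consecutive valid deliveries})]$ weighted by delivery rates, exploiting the memorylessness of the geometric distribution to make the relevant gap variables independent geometrics. With $\mu_{\text A}+\mu_{\text B}-\mu_{\text A}\mu_{\text B}$ being exactly the probability that \emph{at least one} of the two queues delivers in a given slot, the denominator $(\mu_{\text A}+\mu_{\text B}-\mu_{\text A}\mu_{\text B})^3$ in \eqref{eqgeogeo} strongly suggests that the right normalization is the effective delivery rate of the merged process, and that the numerator is a second‑moment‑type quantity of an inter‑delivery gap. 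I would try to organize the computation so that this interpretation is made explicit.

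The hard part will be the bookkeeping in handling the ``previous period'' contribution $K$ together with the randomness of the period length: when $T_{\text B}$ is itself random and geometric, the boundary terms $Y_{k+1}=T_{\text B}-\sum_j Y_j$ and the overlap of the $A$‑staircase with the $B$‑delivery at the period boundary (the cases analogous to states $(k,1)$ and $(k,n)$ in Section~\ref{state}) no longer have the tidy ``$1/T$ chance of simultaneity'' structure. I expect the simultaneity and domination events to require a separate small case analysis, and I would need to verify that after summing over $T_{\text B}$ these corrections contribute exactly the cross terms $-3\mu_{\text A}^2\mu_{\text B}-3\mu_{\text A}\mu_{\text B}^2+2\mu_{\text A}^2\mu_{\text B}^2$ appearing in the numerator. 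If that sub‑computation proves unwieldy, falling back to the direct stationary‑process argument of the previous paragraph—where memorylessness removes the need to track the previous period at all—is the safer path, and that is the approach I would ultimately pursue.
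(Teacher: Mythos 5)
Your plan is a genuinely different route from the paper's, and as written it has concrete gaps that would keep either of your two variants from closing. In the first variant, the claim that Lemma~\ref{ol1} ``applies verbatim'' and that the per-state areas $\mathbb{E}[Q|(k,n)]$ of Table~\ref{tablepaoi} carry over with $T$ merely replaced by a random $T_{\text B}$ is not correct: those formulas encode that sensor B's delivered update has age exactly $T$ upon delivery and, crucially, that the \emph{previous} and \emph{current} periods have the same length $T$. In a Geo-Geo system the previous and current B inter-delivery intervals are independent geometric variables of different lengths, the age of B's delivered update equals the (random) current interval, and the validity classification at the period boundary (including the $1/T$ simultaneity case of state $(k,1)$) changes accordingly; the whole state analysis of Section~\ref{state} would have to be redone with the state augmented by both interval lengths, not merely re-averaged over one of them. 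Your second variant is only a heuristic: the AoI reset value after a delivery is that queue's \emph{own} service time, not the inter-delivery gap of the merged Bernoulli$(\mu_{\text A}+\mu_{\text B}-\mu_{\text A}\mu_{\text B})$ process, and valid deliveries of the merged process do not form a renewal stream in any simple way, so reading \eqref{eqgeogeo} as ``second moment of the merged gap over rate cubed'' does not by itself produce a derivation.

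The paper sidesteps all of this with an argument you did not consider: the system AoI is $\min\{\Delta_{\text{A,ZW/Geo/1}},\Delta_{\text{B,ZW/Geo/1}}\}$ and the age of the stalest information (AoSI) is the corresponding $\max$, so the pointwise identity $\min+\max=\text{sum}$ gives $\overline{\Delta}_{\text{Geo-Geo}}=\overline{\Delta}_{\text{ZW/Geo/1}}(\mu_{\text A})+\overline{\Delta}_{\text{ZW/Geo/1}}(\mu_{\text B})-\overline{\Delta}_{\text{S,Geo-Geo}}(\mu_{\text A},\mu_{\text B})$; the average AoSI of two parallel ZW/Geo/1 queues is quoted from \cite{geogeo} and $\overline{\Delta}_{\text{ZW/Geo/1}}=2/\mu$ from \eqref{AoIGeo}, which immediately yields \eqref{eqgeogeo}. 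If you want a self-contained derivation closer in spirit to your second variant, the workable version is to use the independence of the two stationary age processes and compute $\mathbb{E}[\min\{\Delta_{\text A},\Delta_{\text B}\}]=\sum_{k\ge 1}\mathbb{P}(\Delta_{\text A}\ge k)\,\mathbb{P}(\Delta_{\text B}\ge k)$ directly, each marginal tail being that of a sum of two independent geometrics by the memorylessness of the ZW/Geo/1 dynamics; this exploits exactly the product structure that your merged-process decomposition obscures.
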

			
			\begin{proof} 
				Instead of computing the average AoI of the Geo-Geo system via graphical analysis method, we show it can be derived based on the average age of the stalest information (AoSI). The AoSI is defined as 
				the worse-one instantaneous  AoI of the two parallel queues \cite{geogeo}. 
				Note that the AoI and the AoSI of a dual-queue system are essentially the minimum and the maximum of the AoIs of the two ZW/Geo/1 queues. Hence, the instantaneous AoI of the Geo-Geo system ${\Delta_{\text{Geo-Geo}}}$, that of the two ZW/Geo/1 queues, i.e.,  ${\Delta_{\text{A,ZW/Geo/1}}}$ and ${\Delta_{\text{B,ZW/Geo/1}}}$, and the instantaneous AoSI of the Geo-Geo system ${\Delta_{\text{S,Geo-Geo}}}$, satisfy that
				\begin{align}
					\nonumber{\Delta_{\text{Geo-Geo}}}+{\Delta_{\text{S,Geo-Geo}}}=&\min\{{\Delta_{\text{A,ZW/Geo/1}}},{\Delta_{\text{B,ZW/Geo/1}}}\}+\\
					\nonumber&\max\{{\Delta_{\text{A,ZW/Geo/1}}},{\Delta_{\text{B,ZW/Geo/1}}}\}\\
					=&{\Delta_{\text{A,ZW/Geo/1}}}+{\Delta_{\text{B,ZW/Geo/1}}},
				\end{align} 
				which naturally implies that 
				\begin{align}
					&{\overline{\Delta}_{\text{Geo-Geo}}}(\mu_{\text A},\mu_{\text B}) \nonumber \\
					&={\overline{\Delta}_{\text{ZW/Geo/1}}}(\mu_{\text A})\!+\!{\overline{\Delta}_{\text{ZW/Geo/1}}}(\mu_{\text B})\!-\!{\overline{\Delta}_{\text{S,Geo-Geo}}}(\mu_{\text A},\mu_{\text B}). \label{togeogeo} 
				\end{align}

				Specifically,  the AoSI of the Geo-Geo system in \cite{geogeo}   is 
				\begin{align}
					&{\overline{\Delta}_{\text{S,Geo-Geo}}}(\mu_{\text{A}},\mu_{\text{B}})=\frac{2}{\mu_{\text A}}+\frac{2}{\mu_{\text B}}-\frac{2}{(\mu_{\text A}+\mu_{\text B}-\mu_{\text A}\mu_{\text B})^3} \times   \nonumber \\
					&\big(\mu_{\text A}^2+\mu_{\text B}^2+3\mu_{\text A}\mu_{\text B}- 3\mu_{\text A}^2\mu_{\text B}-3\mu_{\text A}\mu_{\text B}^2+2\mu_{\text A}^2\mu_{\text B}^2\big).               
				\end{align} 			
				while $\overline{\Delta}_{\text{A,ZW/Geo/1}}=2/\mu_{\text{A}}$ and $\overline{\Delta}_{\text{B,ZW/Geo/1}}=2/\mu_{\text{B}}$ can be readily obtained from \eqref{AoIGeo}. Thus, based on  \eqref{togeogeo}, one can see that ${\overline{\Delta}_{\text{Geo-Geo}}}(\mu_{\text A},\mu_{\text B})$ can be computed as \eqref{eqgeogeo}.
			\end{proof}
			Then, the following relation between the average AoI of the Geo-Geo system and that of the M-M system can be found. 
			\begin{Theorem}
			    \label{T4}
				Consider an M-M system with service rate $\mu_{\text{A}}$ and $\mu_{\text{B}}$. It has that
				\begin{align}\label{AoIMM}
					\lim\limits_{\delta \to \infty}\frac{1}{\delta}{\overline{\Delta}_{\text {Geo-Geo}}}\Big(\frac{\mu_{\text{A}}}{\delta},\frac{\mu_{\text{B}}}{\delta}\Big) &=\overline{\Delta}_{\text{M-M}}(\mu_{\text{A}},\mu_{\text{B}}).
				\end{align}
			\end{Theorem}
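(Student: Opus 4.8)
The plan is to proceed by direct computation from the closed-form expression for $\overline{\Delta}_{\text{Geo-Geo}}$ established in the preceding lemma, exactly mirroring the proof of Theorem \ref{T3}. First I would substitute $\mu_{\text{A}}\mapsto\mu_{\text{A}}/\delta$ and $\mu_{\text{B}}\mapsto\mu_{\text{B}}/\delta$ into \eqref{eqgeogeo}. Collecting powers of $\delta$, the numerator of $\overline{\Delta}_{\text{Geo-Geo}}(\mu_{\text{A}}/\delta,\mu_{\text{B}}/\delta)$ is $2\big(\delta^{-2}(\mu_{\text{A}}^2+\mu_{\text{B}}^2+3\mu_{\text{A}}\mu_{\text{B}})-3\delta^{-3}\mu_{\text{A}}\mu_{\text{B}}(\mu_{\text{A}}+\mu_{\text{B}})+2\delta^{-4}\mu_{\text{A}}^2\mu_{\text{B}}^2\big)$, while the denominator is $\delta^{-3}(\mu_{\text{A}}+\mu_{\text{B}}-\mu_{\text{A}}\mu_{\text{B}}/\delta)^3$. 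Cancelling the common factor $\delta^{-3}$ and then applying the prefactor $1/\delta$ from the statement leaves
\[
\tfrac{1}{\delta}\overline{\Delta}_{\text{Geo-Geo}}\!\Big(\tfrac{\mu_{\text{A}}}{\delta},\tfrac{\mu_{\text{B}}}{\delta}\Big)=\frac{2\big((\mu_{\text{A}}^2+\mu_{\text{B}}^2+3\mu_{\text{A}}\mu_{\text{B}})-3\delta^{-1}\mu_{\text{A}}\mu_{\text{B}}(\mu_{\text{A}}+\mu_{\text{B}})+2\delta^{-2}\mu_{\text{A}}^2\mu_{\text{B}}^2\big)}{(\mu_{\text{A}}+\mu_{\text{B}}-\mu_{\text{A}}\mu_{\text{B}}/\delta)^3},
\]
a ratio whose numerator and denominator each converge as $\delta\to\infty$ with the denominator limit $(\mu_{\text{A}}+\mu_{\text{B}})^3\neq 0$, so the limit equals $2(\mu_{\text{A}}^2+\mu_{\text{B}}^2+3\mu_{\text{A}}\mu_{\text{B}})/(\mu_{\text{A}}+\mu_{\text{B}})^3$. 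The final step is to identify this with $\overline{\Delta}_{\text{M-M}}(\mu_{\text{A}},\mu_{\text{B}})$ as reported in \cite{mmmd}, which is a direct comparison of closed forms.

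I would also record the conceptual reason for the convergence, in the same spirit as the discussion following Theorem \ref{T3}: by \eqref{GtoM}, choosing the per-slot success parameter $\mu_{\text{A}}/\delta$ (resp.\ $\mu_{\text{B}}/\delta$) makes the ZW/Geo/1 service time, rescaled by the slot length $1/\delta$, converge in distribution to the ZW/M/1 service time with rate $\mu_{\text{A}}$ (resp.\ $\mu_{\text{B}}$), so a sequence of Geo-Geo systems with shrinking slot length $1/\delta$ approaches the M-M system; the factor $1/\delta$ converts per-slot AoI to per-second AoI. A slightly shorter route avoids the brute-force substitution by invoking the AoSI decomposition \eqref{togeogeo}: since $\frac1\delta\overline{\Delta}_{\text{ZW/Geo/1}}(\mu/\delta)=2/\mu=\overline{\Delta}_{\text{ZW/M/1}}(\mu)$ exactly (from \eqref{AoIGeo}), the claim reduces to the single limit $\frac1\delta\overline{\Delta}_{\text{S,Geo-Geo}}(\mu_{\text{A}}/\delta,\mu_{\text{B}}/\delta)\to\overline{\Delta}_{\text{S,M-M}}(\mu_{\text{A}},\mu_{\text{B}})$ applied to the correction term only.

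Because \eqref{eqgeogeo} is a rational function of $\mu_{\text{A}},\mu_{\text{B}}$, there is no analytic subtlety, so I do not expect a genuine obstacle; the argument is essentially bookkeeping of powers of $\delta$. The one place requiring care is confirming that the net power of $\delta$ after cancelling the $\delta^{-3}$ in numerator and denominator against the prefactor $1/\delta$ is exactly $\delta^{0}$, so that the lower-order contributions $\delta^{-1}\mu_{\text{A}}\mu_{\text{B}}(\mu_{\text{A}}+\mu_{\text{B}})$, $\delta^{-2}\mu_{\text{A}}^2\mu_{\text{B}}^2$, and $\mu_{\text{A}}\mu_{\text{B}}/\delta$ all drop out in the limit; an off-by-one in this power count is the only realistic source of error, and it is dispelled by the explicit display above.
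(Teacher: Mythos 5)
Your proposal is correct and follows essentially the same route as the paper: substitute $\mu_{\text{A}}/\delta$ and $\mu_{\text{B}}/\delta$ into the closed form \eqref{eqgeogeo}, track the powers of $\delta$ (numerator $\sim\delta^{-2}$, denominator $\sim\delta^{-3}$, cancelled by the prefactor $1/\delta$), and match the resulting limit $2(\mu_{\text A}^2+3\mu_{\text A}\mu_{\text B}+\mu_{\text B}^2)/(\mu_{\text A}+\mu_{\text B})^3$ against Eq.~(2) of \cite{mmmd}. Your power-of-$\delta$ bookkeeping is accurate, and the additional AoSI-based shortcut and the distributional-convergence remark are consistent with the paper's surrounding discussion.
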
	
			\begin{proof}
				\begin{align}
					&\lim\limits_{\delta \to \infty}\frac{1}{\delta}{\overline{\Delta}_{\text {Geo-Geo}}}\Big(\frac{\mu_{\text{A}}}{\delta},\frac{\mu_{\text{B}}}{\delta}\Big) \nonumber \\
					= &   \lim\limits_{\delta \to \infty} \frac{1}{\delta}\cdot\frac{2(\frac{\mu_{\text A}^2}{\delta^2}+\frac{\mu_{\text B}^2}{\delta^2}+\frac{3\mu_{\text A}\mu_{\text B}}{\delta^2}-\frac{3\mu_{\text A}^2\mu_{\text B}+3\mu_{\text A}\mu_{\text B}^2}{\delta^3}+\frac{2\mu_{\text A}^2\mu_{\text B}^2}{\delta^4})}{(\frac{\mu_{\text A}}{\delta}+\frac{\mu_{\text B}}{\delta}-\frac{\mu_{\text A}}{\delta}\frac{\mu_{\text B}}{\delta})^3}                      \nonumber \\
					=&   \frac{2(\mu_{\text A}^2+3\mu_{\text A}\mu_{\text B}+\mu_{\text B}^2)}{(\mu_{\text A}+\mu_{\text B})^3}                  \nonumber \\
					\overset{\text{(a)}}{=}&\overline{\Delta}_{\text{M-M}}(\mu_{\text{A}},\mu_{\text{B}}), \label{geotomm}
				\end{align}
				where (a) follows from  Eq. (2) of \cite{mmmd}.
			\end{proof}
			

			\subsection{Discussions}
			Theorems \ref{T3} and \ref{T4} indicate that the continuous-time system is just a limit case of the counterpart discrete-time system in terms of AoI. This manifests that the AoI result of the discrete-time system is more general. Comparing the expressions of the  discrete-time system with those of the continuous-time system, it can be found that  more terms contribute to the average AoI and PAoI of the discrete-time system. These terms become higher-order infinitesimals as the time slot length approaches zero. Hence, studying the AoI of the discrete-time system can provide more details not seen in    the results of the continuous-time system. Note that when evaluating the effect of a redundant queue, a 39.74\% is achieved by adding a ZW/D/1 queue for a ZW/M/1 queue\cite{mmmd}, regardless of the service rate $\mu$. In contrast, for the discrete-time system, the 39.74\% reduction ratio in terms of average AoI is found (cf. \eqref{reductionGeo}) only when $\mu$ goes to $0$. In fact, the reduction ratio is strongly related to the service rate $\mu$ for the discrete-time system. This is  different from the continuous-time system. 

			\section{ Numerical Results}\label{IV}
			In this section,  numerical results are provided to evaluate the performance of the dual-queue Geo-D system. 
				When we equalize the service rate of sensor A with that of sensor B, i.e., $T = 1/p$, we can see the gain in AoI of the dual queue compared to the single queue.\par
			For two single queues, we can see that the change trend of gain is consistent. As the service rate increases, the gain decreases. When the service rate reaches 1, the AoI of all queues reaches the theoretical minimum. Therefore, when the service rate is 1, there is no longer any  gain. This is in line with what is intuitively expected.\par
			It should be noted that, in practice, for a dual queue system, the parameters of fixed service duration and geometric distribution should be equivalent. That's exactly why we make $\mu_A=\mu_B$, to see the AoI gain of the system. There should be no big gap. When the difference between the two service rates is too large, the two-queue system will degenerate into a single-queue system. According to \eqref{AoI}, Table \ref{trend} is obtained, which shows this degradation trend. The expression for the degraded queue is exactly the same as the expression of AoI in.\par
			\begin{table*}[htp]
				\centering
				\caption{Numerical results of AOI under different parameters  } 
				\begin{tabular}{|c|c|c|c|}
					\hline
					& $T\to1$ & $T$&$T\to\infty$\\
					\hline
					
					$p\to1$&$\rm\Delta_{Geo-D}=2$&$2$&$2$ \\
					\hline
					
					$p$&$\rm\Delta_{Geo-D}=2$  & $\rm\Delta_{Geo-D}(p,T)$&$\frac{2}{p}$ \\
					\hline
					
					$p\to0$&$\rm\Delta_{Geo-D}=2$&$\frac{3T+1}{2}$&$\infty$ \\
					\hline
				\end{tabular}
				\label{trend}
			\end{table*}\par
			It is important to note that the range of values of $T$. The equation is based on the premise that $T$ is greater than 2 and is an integer. The various equations in the calculation process are also based on this premise.

			Fig. \ref{sim} depicts the average AoI and PAoI (cf. \eqref{expaoi} and \eqref{exaoi}) of the Geo-D system for different $\mu_{\text {B}}$ when $\mu_{\text A}=p$ ranges from 0 to 1. The simulated results are also presented in Fig. \ref{sim} with the same parameters. Specifically, we simulate the service process of the Geo-D system for 5000 service   periods of sensor B  as one round and repeat 10 rounds using different random seeds  of service time of sensor A.
   Then, the mean value of the 10 rounds of the 5000-period average AoI and average PAoI are taken as the simulation value. It can be seen from Fig. \ref{sim} that the simulation results match well with the theoretical ones, which validates the effectiveness of our analysis. It is further demonstrated that when the difference in the service rate of the two servers is significant, the effect of the weak server on the AoI of the Geo-D system becomes small. The AoI of the dual-queue system is dominated by the strong server. This is consistent with the analysis  of the boundary cases.
			

			\begin{figure}[htp] 
				\subfloat{\includegraphics[width=0.9\textwidth]{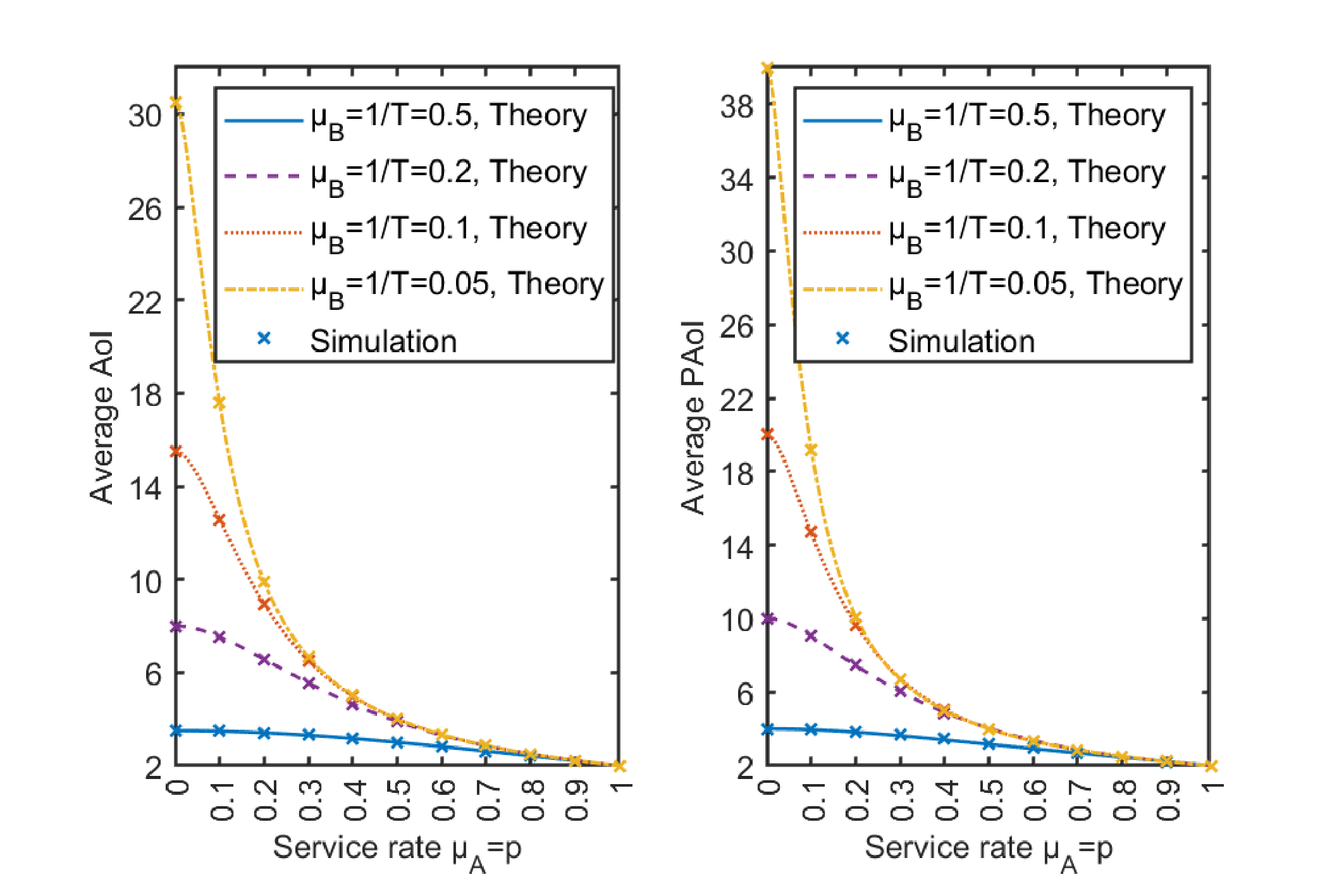}} 
				\caption{Theoretical and simulation results of the average AoI and PAoI of Geo-D system under different parameters. }
				\label{sim}
			\end{figure}


			Fig. \ref{aoiu} presents the average AoI of the Geo-D dual-queue system versus the service rate when $\mu_{\text{A}}=\mu_{\text{B}}$. For comparison, the average AoI of the ZW/Geo/1 queue (cf. \eqref{AoIGeo}), that of the ZW/D/1 queue (cf. \eqref{AoID}), and the corresponding reduction ratio \eqref{reductionGeo} and \eqref{reductionD} are also shown. Besides, the average AoI of the Geo-Geo system (cf. \eqref{eqgeogeo}) and that of the D-D system are also given. In particular, in the D-D system, both queues are ZW/D/1 queues, and the average AoI is derived as  $\overline{\Delta}_{\text{D-D}}=\mu+3/(2\mu)-1/2$, where the service start time between two ZW/D/1 queues is considered to be  one slot. It can be seen from Fig. \ref{aoiu} that there is a significant reduction of the AoI of the dual-queue systems compared with that of the two single-queue systems. This is because  introducing  the redundant sensor provides 
the monitor with more frequent valid status updates. More specifically, regarding the Geo-D system, the reduction ratio of the average AoI keeps increasing when the service rate diminishes. It can be understood that when the service rate is low, the ratio of valid status updates to all status updates will increase due to the long update arrival interval of each queue and the asynchronism of the two queues. The reduction ratio resulting from the ZW/D/1 queue is roughly twice  that from the ZW/Geo/1 queue when $\mu_{\text{A}}=\mu_{\text{B}}\leq0.3$. This implies that a low variance of service time  is  valuable for avoiding the deterioration of the average AoI for a single-queue system. Interestingly, the determinacy of the service process is not always good for AoI reduction for dual-queue systems. As  seen from Fig. \ref{aoiu}, both the Geo-Geo and Geo-D systems outperform the D-D system regarding average AoI for a wide range of service rates. This implies that some randomness in the arrival of the updates would  contribute to the AoI reduction.    

			\begin{figure}[htp] 
				\hfill 	
				\subfloat{\includegraphics[width=0.9\textwidth]{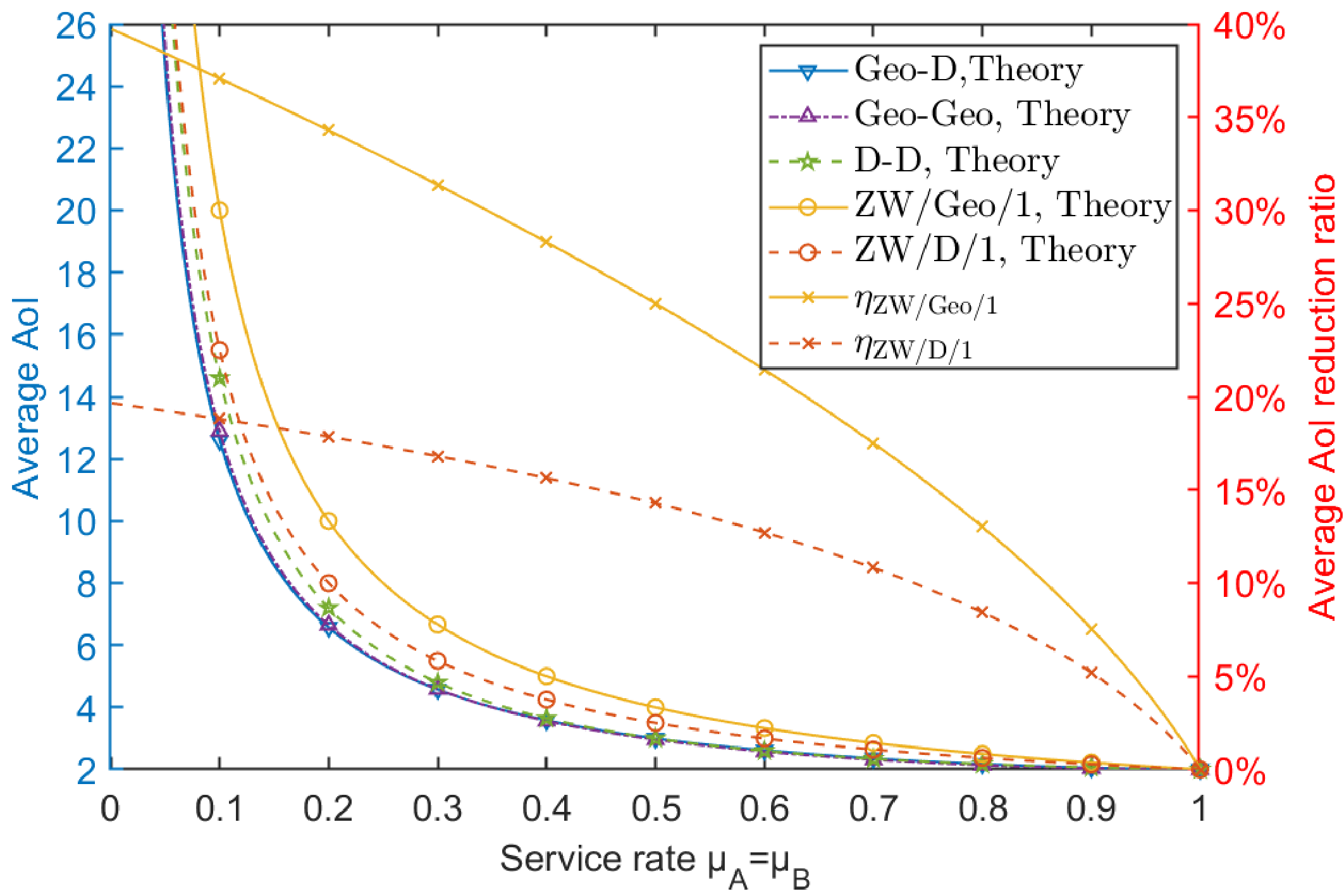}} 
				\caption{Average AoI and reduction ratio of different systems versus service rate $\mu_{\text A}=\mu_{\text {B}}$.}
				\label{aoiu}
			\end{figure}	
			
			To explore the information freshness of the Geo-D system and the Geo-Geo system, we present the average AoI normalized with ${\overline{\Delta}_{\text{A,ZW/D/1}}}=2/\mu_{\text{A}}$ versus the service rate ratio $\mu_{\text{B}}/\mu_{\text{A}}$ in Fig. \ref{dual}. 
			It is seen that when the service rate is high, i.e., $\mu_{\text{A}}=0.5$, the Geo-Geo system can achieve a small average AoI compared with the Geo-D system. This manifests that
			for a high service rate, a high randomness of both queues can be crucial for achieving  good information freshness. In contrast, when the service rate is low, i.e., $\mu_{\text{A}}=0.1$, the Geo-D system would perform better than the Geo-Geo system when $\mu_{\text {B}}/\mu_{\text A}>0.85$, i.e., when the service rate difference between the two queues is small. This indicates that when the server of both queues are weak but competitive, a periodic status update flow will help  maintain a low average AoI compared with a Bernoulli process. The observed significant effect of the statistics of each queue on the information freshness of the dual-queue system reveals that efforts in carefully designing the component queue for the dual-queue system are extremely valuable.  
			
			\begin{figure}[htp]    
				\subfloat{\includegraphics[width=0.9\textwidth]{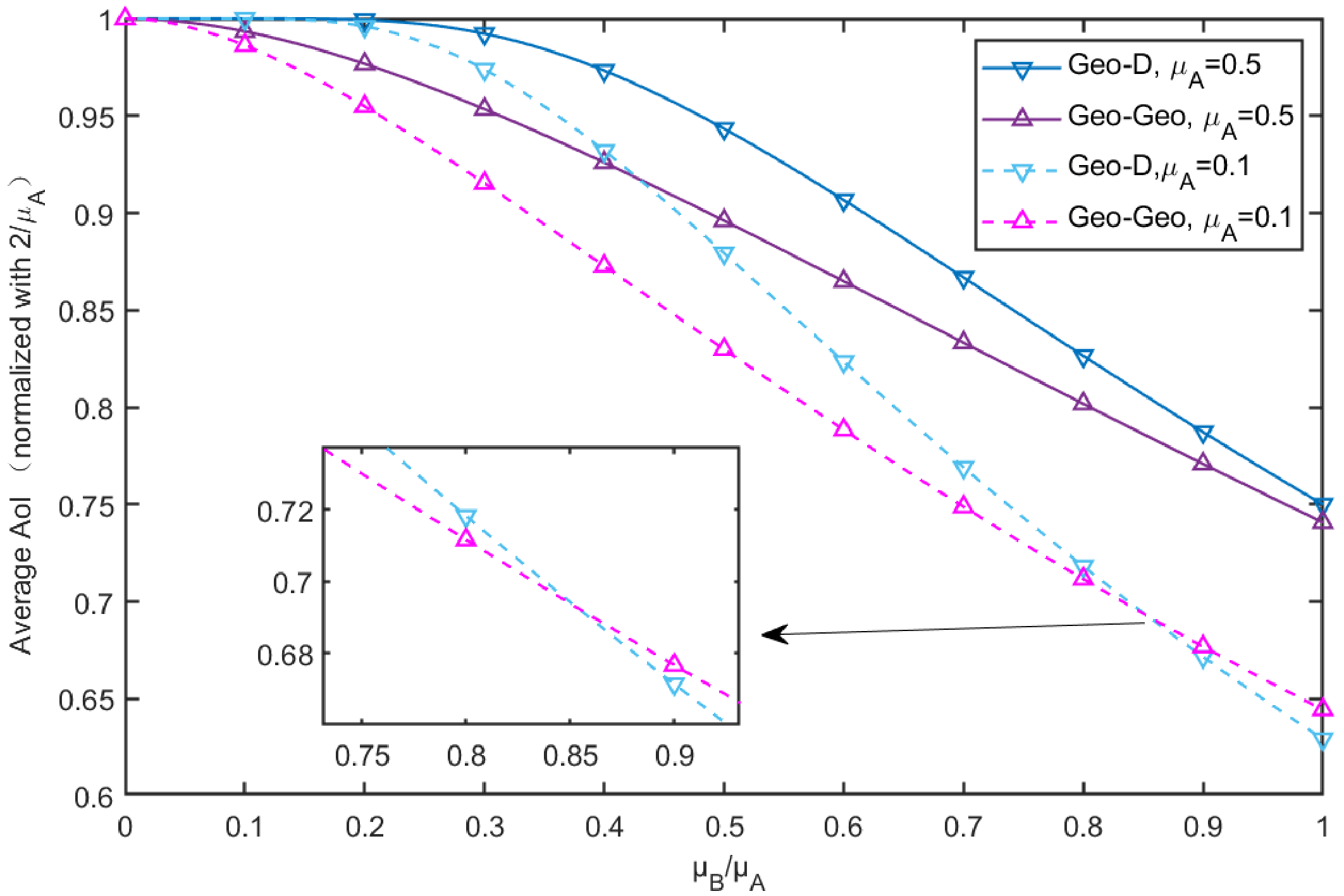}}  
				\caption{Normalized average AoI  of dual-queue systems versus service rate ratio $\mu_{\text {B}}/\mu_{\text A}$. }
				\label{dual}
			\end{figure}

			Fig. \ref{aoiuab} presents the reduction ratio of the average AoI with unequal service rates versus the service rate ratio. The obsolete update ratio, i.e., obsolete updates to all updates, is also illustrated. It is clear from Fig. \ref{aoiuab} that increasing the service rate of an additional queue leads to an increasing average AoI reduction ratio,   as expected. This is achieved by a more frequent arrival of valid status updates  with a penalty of an increasing obsolete update ratio.  It is valuable to see that when the service rate ratio is small (resp. large), the reduction ratio of adding a ZW/Geo/1 queue (resp. a ZW/D/1 queue) would be more significant than adding a ZW/D/1 queue (resp. a ZW/Geo/1 queue). This supports the usefulness of both kinds of queues. Besides, it is also found that if the two service rates    shrink proportionally, the reduction ratio increases. This manifests that the average AoI gain brought by the dual-queue system is relatively significant  for low service rates.  	
			
			\begin{figure}[htp] 
				\hfill 	
				\subfloat{\includegraphics[width=0.9\textwidth]{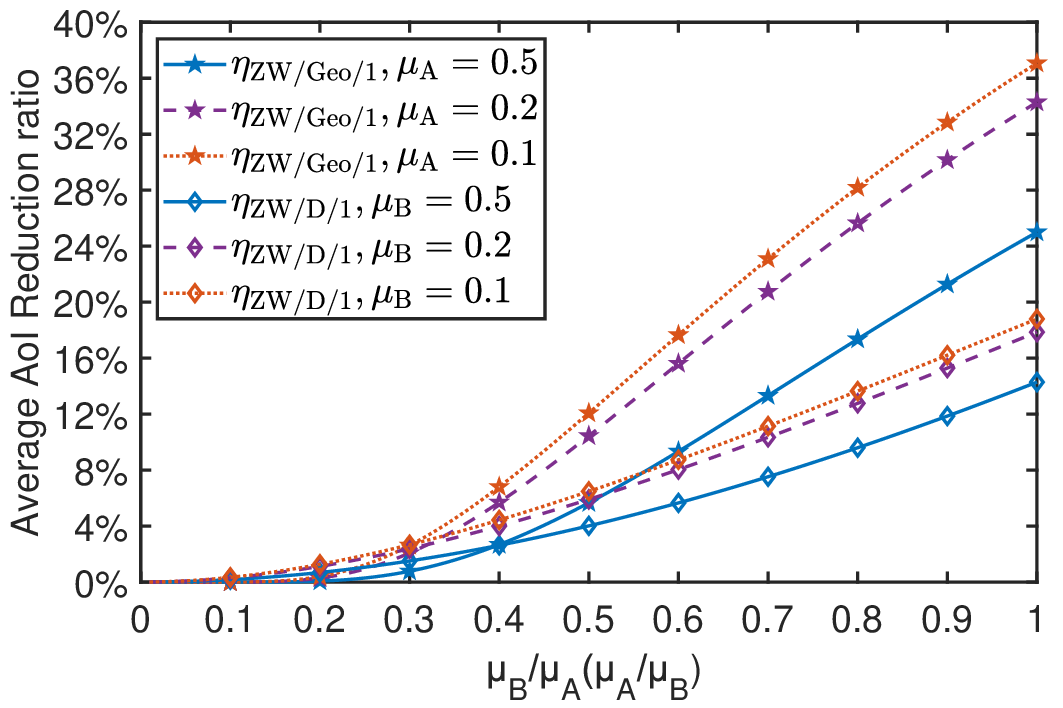}} 
				\caption{Average AoI reduction ratio versus service rate ratio  $\mu_{\text {B}}/\mu_{\text A}$ ($\mu_{\text A}/\mu_{\text {B}}$).}
				\label{aoiuab}
			\end{figure}

			Fig. \ref{paoi} compares the average PAoI performance of the considered Geo-D system with other related queues. Specifically, the average PAoI of the Geo-Geo system and that of the D-D system are obtained from simulations, i.e., taking the mean of 10 rounds of 5000 time slots.
			Comparing  dual-queue systems, it is seen from Fig. \ref{paoi}\subref{paoiu} that the Geo-Geo system, which has the highest level of randomness in the service time, performs the best in terms of the average PAoI, followed by the Geo-D system and the D-D system, where the latter possesses no randomness in service time. This is because the system with a higher level of randomness provides the possibility to distribute the value of the AoI peak, reducing the average PAoI for some cases.
			Besides, similar to the average AoI, the dual-queue systems outperform the single-queue system in terms of the PAoI. This also benefits from the increased valid status updates the parallel updating brings. 
			Fig. \ref{paoi}\subref{paoiuab} further presents how the reduction ratios change with the service rate ratio in detail. Different from the performance in terms of the average AoI, one can see that adding a ZW/Geo/1 queue for a ZW/D/1 queue would  lead to a reduction of PAoI, while the opposite would not always. Adding a relatively small service rate of ZW/D/1 queue ($\mu_{\text{B}}/\mu_{\text{A}}<0.4$) leads to the at most up to 1.3\% increase of the average PAoI. This can be explained as follows. When $\mu_{\text B}/\mu_{\text{A}}$ is small, the long deterministic service time of sensor B contributes more frequently to the peak AoI of the dual-queue system,  reshaping  the distribution of PAoI and resulting in  a slight increase in the average PAoI. When $\mu_{\text B}/\mu_{\text{A}}$ becomes large, more frequent valid status updates finally lead to the decrease of the average PAoI of the Geo-D system.

			\begin{figure}[htp]    
				\subfloat[Average PAoI and reduction ratio versus service rate $\mu_{\text A}=\mu_{\text {B}}$  variation\label{paoiu}]{\includegraphics[width=0.9\textwidth]{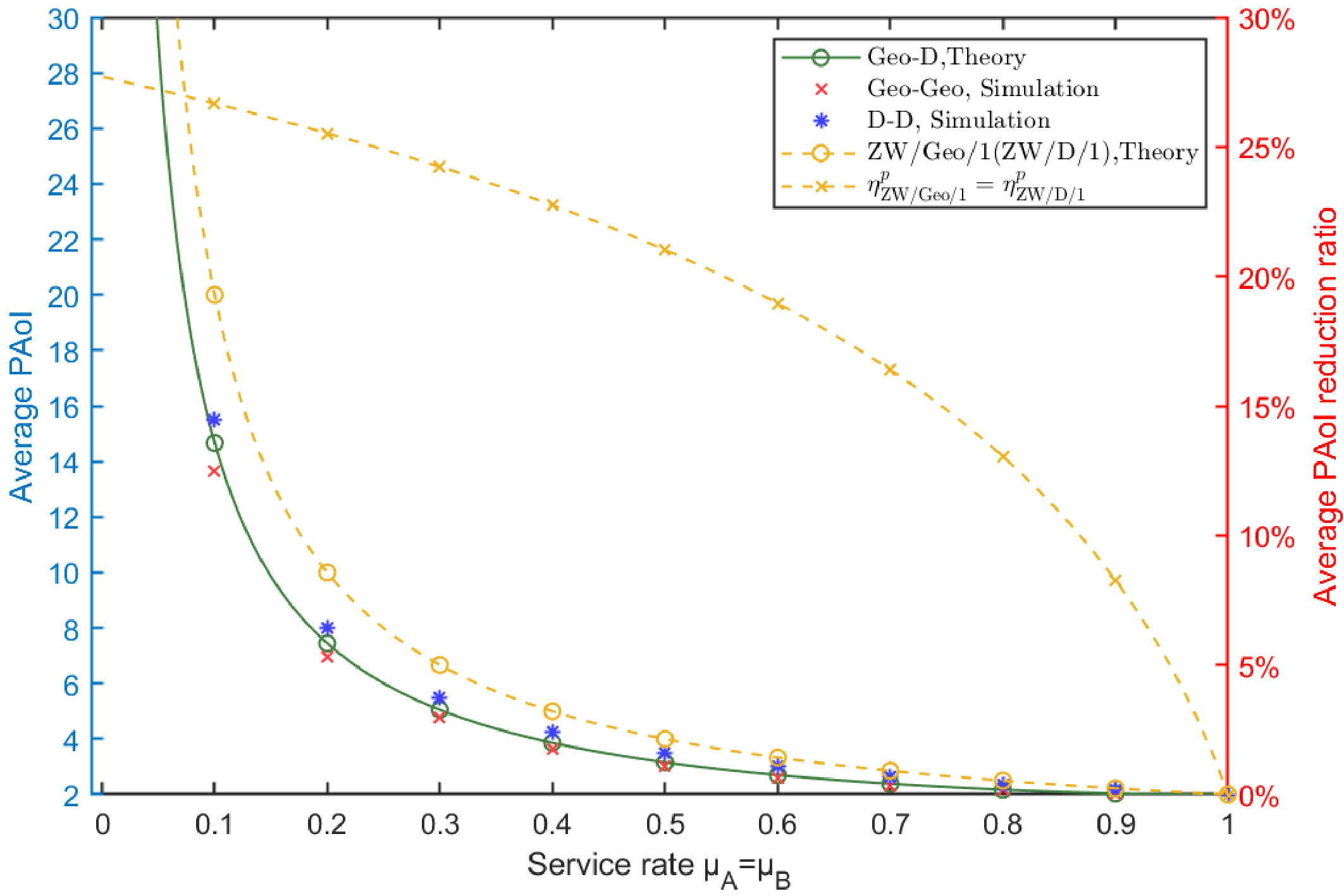}} 
				\hfill 	
				\subfloat[Average PAoI reduction ratio versus service rate ratio $\mu_{\text {B}}/\mu_{\text A}$ \label{paoiuab}]{\includegraphics[width=0.9\textwidth]{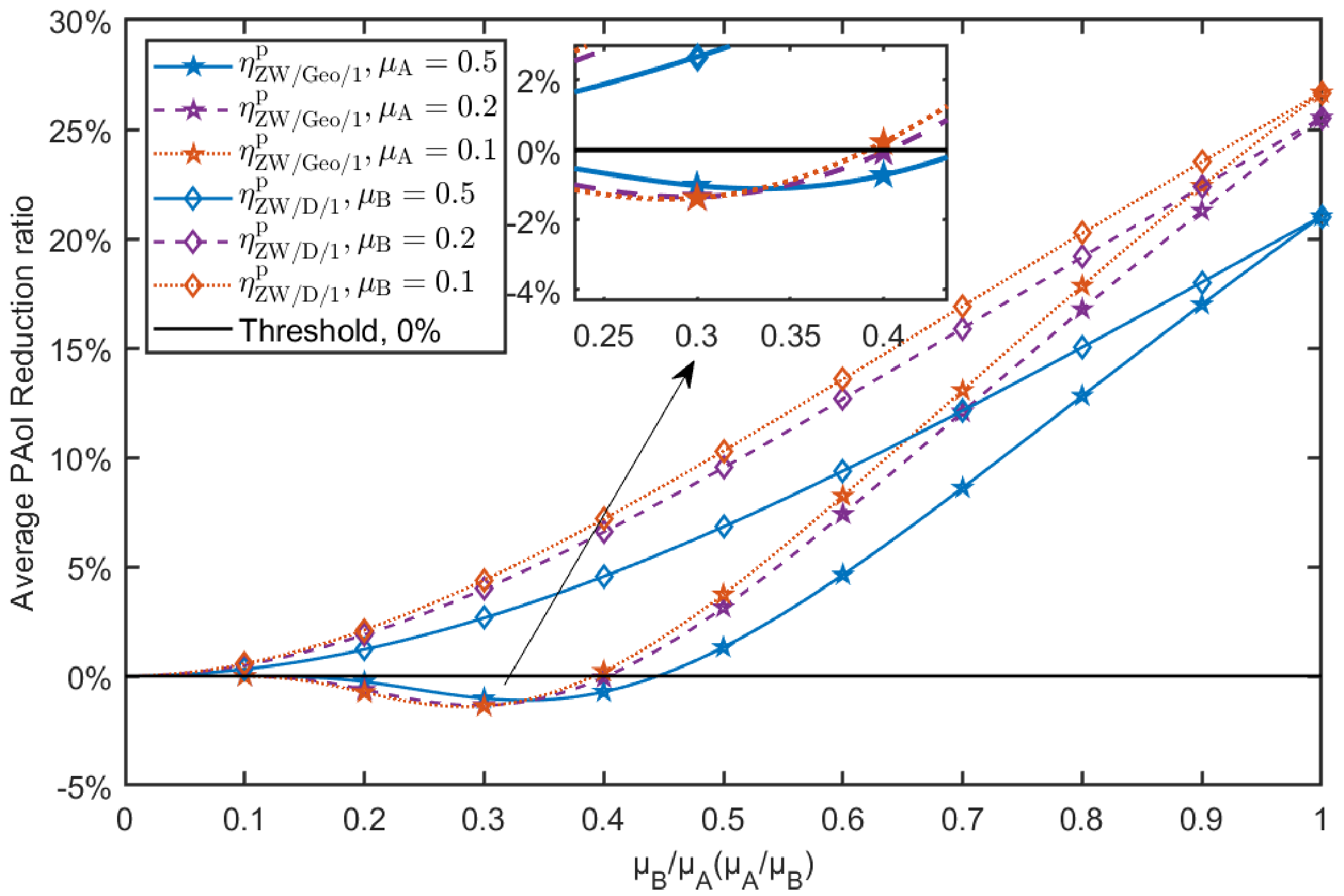}}  
				\caption{Comparison of the Geo-D system with other benchmark queues in terms of average  PAoI. }
				\label{paoi}
			\end{figure}
			
			Finally, Fig. \ref{approximation} compares the average AoI of the discrete-time system and the continuous-time counterpart for different time slot length factor $\delta$. For both the Geo-D system and the Geo-Geo system, the average AoI in second (multiplied with time length $1/\delta$) quite approaches  that of the M-D   and M-M systems when $\delta=1000$. This validates the effectiveness of the results presented in Theorem \ref{T3} and Theorem \ref{T4}, respectively. There are another two key observations  to be highlighted in Fig. \ref{approximation}. First, when $\delta=100$, there is a significant gap between the discrete and  continuous-time systems, especially for large service rates. This implies that dedicated design should be  considered for discrete-time and continuous-time systems to achieve good information freshness. Second, when $\delta$ increases, the convergence from the Geo-Geo system to the M-M system is  faster than that from the Geo-D system to the M-D system. Hence, in cases where one needs to approximate the continuous-time system based on the discrete-time system in terms of average AoI, an appropriate time slot should be adopted to make the approximation  accurate. 
			\begin{figure}[htp]   
				\subfloat{\includegraphics[width=0.9\textwidth]{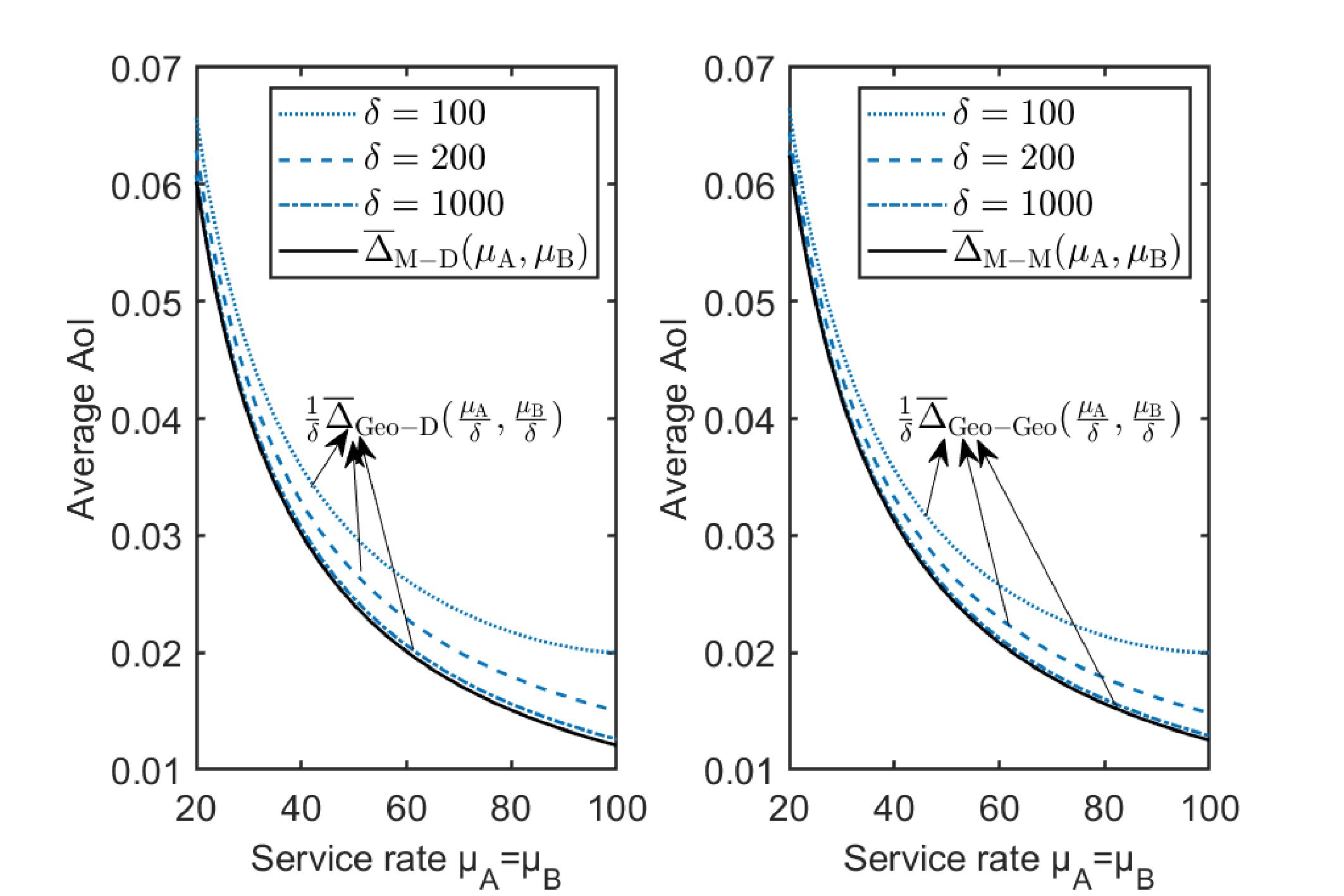}} 
				\caption{The average AoI  of discrete-time system and the counterpart continuous-time system versus service rate $\mu_{\text{A}}=\mu_{\text{B}}$. }
				\label{approximation} 
			\end{figure}			
			\section{ Conclusion}\label{V}
			We investigated the average  AoI and PAoI of the Geo-D system in which the service time of one of the two sensors is geometrically distributed and that of the other is deterministic. Closed-form expressions for the average PAoI and AoI of the Geo-D systems are derived using the graphical analysis method. The average AoI of the Geo-Geo system is  also presented. The connection between the discrete-time system and the counterpart continuous-time system is  studied  based on the obtained results. It is theoretically proven that compared with the ZW/Geo/1 queue and the ZW/D/1 queue,  up to   39.74\% and 19.65\% reduction can be achieved in  average AoI, respectively. Besides,  up to   27.71\% reduction can be achieved in average PAoI for both component single-queue systems. Compared with the Geo-Geo system and the D-D system, the Geo-D system possesses a good balance between randomness and determinacy,  leading to a good performance of information freshness. Finally, it is also proven that the average AoI of the M-M  and the M-D systems are just a limit case of the average AoI of the Geo-Geo  and the Geo-D systems, respectively. Hence, the AoI results obtained for discrete-time systems hold a strong generality and provide a good foundation for status update system design. 
			\appendix
				\subsection{ Proof of Lemma 1}\label{A}
			First, let's consider such a formula:
			\begin{equation}
				\sum_{x_1=1}^{T-n+1}\sum_{x_2=1}^{T-n+2-x_1}\sum_{x_3=1}^{T-\sum_{i=1}^{2}x_i-(n-3)}...\sum_{x_m=1}^{T-\sum_{i=1}^{m-1}x_i-(n-m)}...\sum_{x_n=1}^{T-\sum_{i=1}^{n-1}x_i} 1 .
			\end{equation}
			In order to calculate this formula, we first use the sum in reverse order to simplify it.
			\begin{equation}
				\sum_{x=n}^{m}f(x)=\sum_{x=n}^{m}f(m+n-x).
			\end{equation}
			If the sum of each layer is in reverse order, the above formula can be simplified as:
			\begin{equation}
				\sum_{x_1=1}^{T-n+1}\sum_{x_2=1}^{x_1}...\sum_{x_n=1}^{x_{n-1}} 1.
			\end{equation}
			By simple layer by layer recursion, the following formula can be obtained. The recursive process is omitted here.
			\begin{equation}
				\label{mn}
				\sum_{x_1=1}^{m}\sum_{x_2=1}^{x_1}...\sum_{x_n=1}^{x_{n-1}} 1=\tbinom{m+n-1}{n}.
			\end{equation}
			In the formula, $m$ is the upper limit of the outermost variable, and $n$ is the number of layers of the summation structure.
			Now prove lemma 1.
			\begin{equation}
				\sum_{x_1=1}^{T-n+1}...\sum_{x_i=1}^{T-(n-i)-\sum_{m=1}^{i-1}x_m}...\sum_{x_n=1}^{T-\sum_{i=1}^{n-1}x_i}x_i=\tbinom{T+1}{n+1}.
			\end{equation}
			Using the inverse summation and \eqref{mn}, the original equation is reduced to
			\begin{align}
				&\sum_{x_1=1}^{T-n+1}...\sum_{x_i=1}^{T-(n-i)-\sum_{m=1}^{i-1}x_m}x_i\tbinom{T-\sum_{m=1}^{i-1}}{n-i}=\sum_{x_1=1}^{T-n+1}...\sum_{x_{i-1}=1}^{T-(n-i+1)-\sum_{m=1}^{i-2}x_m}  	\Big\{ \tbinom{T-\sum_{m=1}^{i-1}x_m+1}{n-i+2}                      \nonumber \\
				&+2\tbinom{T-\sum_{m=1}^{i-1}x_m+1}{n-i+2}       ...         +(T-\sum_{m=1}^{i-1}x_m-n+i) \tbinom{n-i}{n-i}      \Big\}  \nonumber \\
				&=\sum_{x_1=1}^{T-n+1}...\sum_{x_{i-1}=1}^{T-(n-i+1)-\sum_{m=1}^{i-2}x_m}\tbinom{T-\sum_{m=1}^{i-1}x_m+1}{n-i+2}             \nonumber \\
				&=\sum_{x_1=1}^{T-n+1}...\sum_{x_{i-2}=1}^{T-(n-i+2)-\sum_{m=1}^{i-3}x_m} \tbinom{T-\sum_{m=1}^{i-2}x_m+2}{n-i+3}    \nonumber \\
				&=\tbinom{T+1}{n+1}.
			\end{align}
			From the formula, it can be seen that the order does not affect the equivalence.
			\begin{equation}
				\sum_{x_1=1}^{T-n+1}...\sum_{x_i=1}^{T-(n-i)-\sum_{m=1}^{i-1}x_m}...\sum_{x_n=1}^{T-\sum_{i=1}^{n-1}x_i}x_i=\tbinom{T+1}{n+1},
			\end{equation}where $i=1,2,..,n$. This completes the proof of Lemma \ref{lemma1}.
			\subsection{ Proof of Lemma 2}\label{B}
			To prove the following lemma:
			\begin{equation}
				\sum_{x_1=1}^{T-n+1}...\sum_{x_i=1}^{T-(n-i)-\sum_{m=1}^{i-1}x_m}...\sum_{x_n=1}^{T-\sum_{i=1}^{n-1}x_i}x_i^{2}=\tbinom{T+1}{n+1}+2\tbinom{T+1}{n+2},
			\end{equation}taking $x_1$ as an example.By adopting the same proof method in Lemma 1, we can get:
			\begin{equation}
				\sum_{x_1=1}^{T-n+1}...\sum_{x_i=1}^{T-(n-i)-\sum_{m=1}^{i-1}x_m}...\sum_{x_n=1}^{T-\sum_{i=1}^{n-1}x_i}x_1^{2}=\sum_{x_1=1}^{T-n+1}\tbinom{T-x_1}{n-1}x_1^{2}.
			\end{equation}
			The formula can be expanded as follows:
			\begin{align}
				\sum_{x_1=1}^{T-n+1}\tbinom{T-x_1}{n-1} x_1^{2}&=\tbinom{T-1}{n-1}     +2^2  \tbinom{T-2}{n-1}    +...+(T-n+1)^2 \tbinom{n-1}{n-1}\nonumber \\
				&=\tbinom{T}{n}+(2^2-1)\tbinom{T-2}{n-1}+...+((T-n+1)^2-1)\tbinom{n-1}{n-1}   \nonumber  \\
				&=\tbinom{T}{n} +3 \tbinom{T-1}{n}+5 \tbinom{T-2}{n}+... +(2T-2n+1) \tbinom{n}{n}        \nonumber \\
				&=\tbinom{T+1}{n+1}+2\tbinom{T+1}{n+2}.
			\end{align}
			Similarly, the order does not affect the equivalence, and the following formula is proved:
			\begin{equation}
				\sum_{x_1=1}^{T-n+1}...\sum_{x_i=1}^{T-(n-i)-\sum_{m=1}^{i-1}x_m}...\sum_{x_n=1}^{T-\sum_{i=1}^{n-1}x_i}x_i^{2}=\tbinom{T+1}{n+1}+2\tbinom{T+1}{n+2},
			\end{equation}Where $i=1,2,..,n.$ This completes the proof of Lemma \ref{lemma2}.
			\subsection{ Proof of Lemma 3}\label{C}
			To prove the following lemma:
			\begin{equation}
				\sum_{x_1=1}^{T-n+1}...\sum_{x_i=1}^{T-(n-i)-\sum_{m=1}^{i-1}x_m}...\sum_{x_n=1}^{T-\sum_{i=1}^{n-1}x_i}x_ix_j=\tbinom{T+2}{n+2},
			\end{equation}
			taking $x_1$ and $ x_2 $ as an example. By adopting the same proof method in Lemma 1, we can get:
			\begin{equation}
				\sum_{x_1=1}^{T-n+1}...\sum_{x_i=1}^{T-(n-i)-\sum_{m=1}^{i-1}x_m}...\sum_{x_n=1}^{T-\sum_{i=1}^{n-1}x_i}x_1x_2=\sum_{x_1=1}^{T-n+1}\sum_{x_2=1}^{T-x_1-n+2}\tbinom{T-x_1-x_2}{n-2}x_1x_2.
			\end{equation}
			The formula can be expanded as follows:
			\begin{align}
				\sum_{x_1=1}^{T-n+1}\sum_{x_2=1}^{T-x_1-n+2}\tbinom{T-x_1-x_2}{n-2}x_1x_2&=\sum_{x_1=1}^{T-n+1}x_1\sum_{x_2=1}^{T-x_1-n+2}x_2\tbinom{T-x_1-x_2}{n-2} \nonumber \\
				&=\sum_{x_1=1}^{T-n+1}x_1\tbinom{T-x_1+1}{n}=\tbinom{T+2}{n+2}.
			\end{align}
			This completes the proof of Lemma \ref{lemma3}.

			\baselineskip=18pt
			
%
			

%
%
%
		\end{document}